\documentclass[aps,floatfix,twocolumn,footinbib]{revtex4} \usepackage{epsfig}

\usepackage{mdframed}
\usepackage{amsthm}
\usepackage{amssymb}
\usepackage{graphicx} 
\usepackage{hyperref} 
\usepackage{tikz}
\usepackage{braket}
\usepackage{dsfont}
\usepackage{mathtools}
\usepackage{mathrsfs}
\newtheorem{remark}{Remark}
\newtheorem{exmp}{Example}
\newtheorem{Lemma}{Lemma}
\newtheorem{defn}{Definition}
\newtheorem{thm}{Theorem}
\newtheorem{Corollary}{Corollary}

\begin{document}
\title{Identification of  single-input-single-output quantum linear systems}
\author{Matthew Levitt}
\email{pmxml2@nottingham.ac.uk}
\affiliation{School of Mathematical Sciences, University of Nottingham, University Park, NG7 2RD Nottingham, United Kingdom}
\author{M\u{a}d\u{a}lin Gu\c{t}\u{a}}
\affiliation{School of Mathematical Sciences, University of Nottingham, University Park, NG7 2RD Nottingham, United Kingdom}

\begin{abstract}

The purpose of this paper is to investigate system identification for single-input-single-output general (active or passive) quantum linear systems. For a given input we address the following questions: (1) Which parameters can be identified by measuring the output? (2) How can we construct a system realization from sufficient input-output data? 

We show that for time-dependent inputs, the systems which cannot be distinguished are related by symplectic transformations acting on the space of system modes. This complements a previous result of \cite{Guta2} for passive linear systems. In the regime of stationary quantum noise input, the output is completely determined by the power spectrum. We define the notion of global minimality for a given power spectrum, and characterize globally minimal systems as those with a fully mixed stationary state. We show that in the case of systems with a cascade realization, the power spectrum completely fixes the transfer function, so the system can be identified up to a symplectic transformation. We give a method for constructing a globally minimal subsystem direct from the power spectrum. Restricting to passive systems the analysis simplifies so that identifiability may be completely understood from the eigenvalues of a particular system matrix.

\end{abstract}\maketitle

\section{Introduction}

We are currently witnessing the beginning of a quantum technological revolution aimed at harnessing features that are unique to the  quantum world such as coherence, entanglement and uncertainty, for practical applications in metrology, computation, information transmission and cryptography \cite{Nielsen,dowling}.
%
The high sensitivity and limited controllability of quantum dynamics has stimulated the development of theoretical and experimental techniques at the overlap between quantum physics and ``classical'' control engineering, such as quantum filtering \cite{MILB, LB}, feedback control \cite{FEED1,FEED2, DOHERTY, Yanagisawa}, network theory \cite{g2, g3, NETWORK, nurdin}, and linear systems theory \cite{petersen, nurdin, peter, JNP, squeezing, Guta2, cascade, cascade2, new, Indep, Levitt}. 

%
%
%

In particular, there has been a rapid growth in the study of quantum linear systems (QLSs), with many applications, e.g., quantum optics, opto-mechanical systems, quantum memories, entanglement generation, electrodynamical  systems and cavity QED systems   \cite{Naoki, wall, Tian, GZ, MILB, STOCK, memory, memory2, ZHANG1, MATYAS, DOHERTY}.


 \begin{figure}[h]
\centering
\includegraphics[scale=0.30]{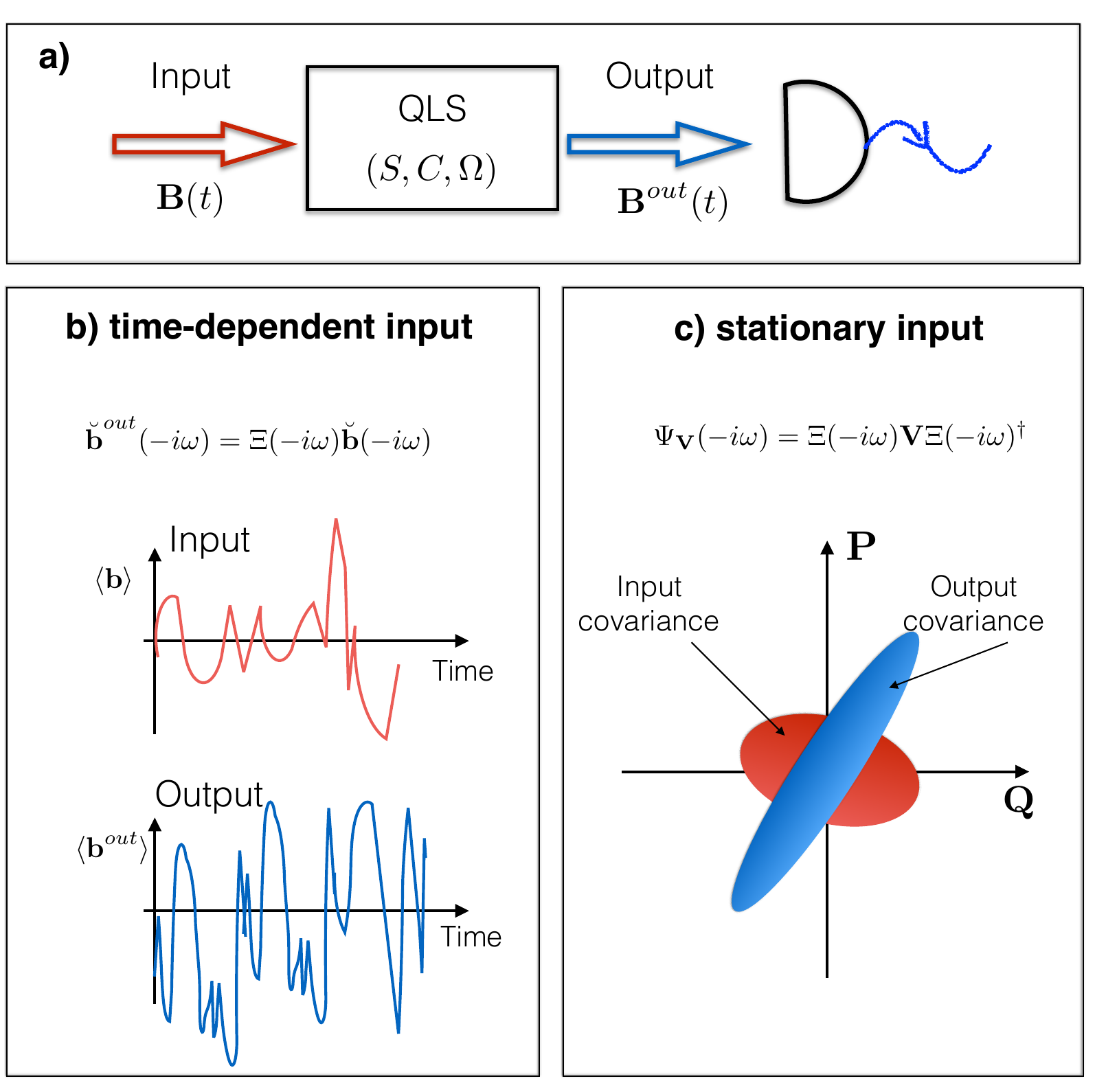}
\caption{(a) System identification problem: find parameters $(S,C,\Omega)$ of a linear input-output system by measuring output. (b) Time-dependent scenario: in frequency domain, input and output are related by the {\it transfer function} $\Xi(-i\omega)$ which depends on $(S,C,\Omega)$.  (c) Stationary scenario: \textit{power spectrum} describes output covariance which is quadratic with respect to 
$\Xi(-i\omega)$ .\label{twoapproach}}
\end{figure}

System identification theory \cite{Ljung,Ljung2,Pintelon&Schoukens, GUTA3, GUTA4} lies at the interface between control theory and statistical inference, and deals with the estimation of unknown parameters of dynamical systems and processes from input-output data. 
The integration of control and identification techniques plays an important role, e.g., in adaptive control \cite{Astrom}.  
The identification of linear systems is by now a well developed subject in classical systems theory \cite{glover, kalman, Ljung, Ljung2, Ho, anders, Youla, zhou, Pintelon&Schoukens, Davies, nerve}, but has not been fully explored in the quantum domain \cite{Guta2}.

This paper deals with the problem of identifying unknown dynamical parameters of \textit{ quantum linear systems} (QLSs). A QLS is a continuous variables open system with modes ${\bf a}= (a_1,\dots, a_n)^T$, which has a quadratic Hamiltonian, and couples linearly to Bosonic input channels ${\bf B}(t)= (B_1(t), \dots , B_m(t))^T$ representing the environmental degrees of freedom in the time domain. The system and environment modes satisfy the commutation relations 
$$
[{\bf a}, {\bf a}^\dagger]= \mathds{1}_n, \quad [{\bf b}(t), {\bf b}(s)^\dagger] = \delta(t-s)\mathds{1}_m, 
$$
where ${\bf b}(t)= \frac{d{\bf B}(t)}{dt}$ is the infinitesimal annihilation operator at time $t$.
The joint dynamics is completely characterized by the triple $(S, C, \Omega)$ consisting of a $2m\times 2m$ scattering matrix $S$, a $2m\times 2n$ system-input coupling matrix $C$, and a $2n \times 2n$ Hamiltonian matrix 
$\Omega$. 
Since each system or channel mode has two coordinates corresponding to creation and annihilation operators, all matrices have  a $2\times 2$ block structure, and it is convenient to use the ``doubled-up'' conventions introduced in \cite{squeezing}, as detailed in Sec. \ref{red1}. The data $(S, C, \Omega)$  fix the joint unitary dynamics ${\bf U}(t)$ obtained as a solution of a quantum stochastic differential equation \cite{path}; due to the quadratic interactions, the evolved modes ${\bf a}(t) := {\bf U}(t)^{\dag} {\bf a} {\bf U}(t)$ and output fields ${\bf B}^{out}(t) := {\bf U}(t)^{\dag} {\bf B}(t) {\bf U}(t)$ are linear transformations of the original degrees of freedom.

In a nutshell, system identification deals with the estimation of dynamical parameters of input-output systems from data obtained by performing measurements on the output fields. 
We distinguish two contrasting approaches to the identification of linear systems, 
which we illustrate in Fig. \ref{twoapproach}. 
In the first approach, one probes the system with a known \emph{time-dependent} input signal (e.g., coherent state), then uses the output measurement data to compute an estimator of the unknown dynamical parameter. In the Laplace domain, the input and output fields are related by a linear transformation given by the $2m\times 2m$ \emph{transfer function} $\Xi(s)$:
\begin{equation}\label{eq.input-output}
\breve{\bf b}^{out}(s) = \Xi(s) \breve{\bf b}(s), \quad 
\end{equation}
where $\breve{\bf b}(s)$ is the vector of input creation and annihilation input noise operators. 
The transfer function $\Xi(s)$ is a rational matrix valued function, which becomes a symplectic matrix in the 
``frequency domain'' (i.e., for $s=-i\omega\in i\mathbb{R}$), reflecting the fact that the unitary dynamics preserves the canonical commutation relations. Similarly to the classical case, Eq. \eqref{eq.input-output} means that the input-output data can be used to reconstruct the transfer function $\Xi(s)$, while systems with the same transfer function cannot be distinguished. Therefore, the basic identifiability problem is to find the equivalence classes of systems with the same transfer function. 

In \cite{Guta2} this problem was analyzed for the special class of \emph{passive} quantum linear systems (PQLSs) and it was shown that \emph{minimal} equivalent systems are related by $n\times n$ \emph{unitary} transformations acting on the space of annihilation modes ${\bf a}$. 
By definition a QLS is minimal if no lower dimensional system has the same transfer function, which in the passive case is equivalent to the system being either observable, controllable, or Hurwitz stable \cite{Guta2}. In Sec. \ref{identifiability} we answer the identifiability question for the case of general (not necessarily passive) QLSs; we show that the equivalence classes are determined by \emph{symplectic} transformations acting on the doubled-up space of canonical variables $\breve{\bf a}$. It is worth noting that while in the classical set-up equivalent linear systems are related by \emph{similarity} transformations, 
in both  quantum scenarios described above the transformations are more restrictive due to the unitary nature of the dynamics. 
 
In the second approach, the input fields are prepared in a \emph{stationary in time}, pure Gaussian state 
with independent increments (squeezed vacuum noise), which is completely characterised by the covariance matrix ${V}={V}(N,M)$ and the associated quantum Ito rule \cite{squeezing} 
\begin{align*}\label{eq:Ito1}
\left(\begin{smallmatrix}{d}\mathbf{B}(t){d}\mathbf{B}(t)^{\dag}&{d}\mathbf{B}(t){d}\mathbf{B}(t)^T\\{d}\mathbf{B}^{\#}(t){d}\mathbf{B}(t)^{\dag}&{d}\mathbf{B}^{\#}(t){d}\mathbf{B}(t)^T\end{smallmatrix}\right)
=\left(\begin{smallmatrix} N^T+\mathds{1}&M\\M^{\dag}&N\end{smallmatrix}\right){d}t:={V}t.
\end{align*}
If the system is minimal and Hurwitz stable, the dynamics exhibits an initial transience period after which it reaches stationarity and the output is in a stationary Gaussian state, whose covariance in the frequency domain is  given by the \emph{power spectrum} 
$$ 
\Psi_{V}(-i\omega) = \Xi(-i\omega) {V} \Xi(-i\omega)^\dagger.
$$
Since the power spectrum depends quadratically on the transfer function, the parameters which are identifiable in the  stationary scenario will also be identifiable in the time-dependent one. Our goal is to understand to what extent the converse is also true. First, we note that for a given minimal system there may exist lower dimensional systems with the same power spectrum. To understand this, consider the system's stationary state and note 
that it can be uniquely written as a tensor product between a pure and a mixed Gaussian state 
(cf. the symplectic decomposition). In Theorem \ref{equivalence} we show that  restricting the system to the mixed component leaves the power spectrum unchanged. Furthermore,  the pure component is passive, which ties in with previous results of \cite{Naoki}. Conversely, if the stationary state is fully mixed, there exists no smaller dimensional system with the same power spectrum. Such systems will be called \emph{globally minimal}, and can be seen as the analog of minimal systems for the stationary setting.

One of the main results is Theorem \ref{mainresult} which shows that for ``generic'' globally minimal single-input-single-output (SISO) systems which admit a cascade representation, the power spectrum $\Psi_{V}(s)$ determines the transfer function $\Xi(s)$ uniquely, and therefore the time-dependent and time-stationary identifiability problems are equivalent.  It is interesting to note that this equivalence is a consequence of unitarity and purity of the input state, and does not hold for generic classical linear systems \cite{glover, anders}.

The paper is  structured as follows. In Sec. \ref{red1} we review the setup of input-output QLSs, and their associated transfer function. We discuss in greater detail the two identifiability approaches mentioned above. In Sec. \ref{identifiability} we study the identifiability of QLSs in the time-dependent input setting. In Theorem \ref{symplecticequivalence} we show that the equivalence classes of input-output systems with the same transfer function are given by symplectic transformations of the system's modes. We further show how a physical realization can be constructed from the system's transfer function.
In Sec. \ref{powerspectrum} we analyze the identifiability of QLSs in a stationary Gaussian noise input setting. 
We introduce the notion of global minimality for systems with minimal dimension for a given power spectrum, and show that a system is globally minimal if and only if it has a fully mixed stationary state,  cf. Theorem \ref{equivalence}. In Sec. \ref{main1} we analyze the structure of the power spectrum identifiability classes, and show that the power spectrum determines the transfer function uniquely, for a large class of SISO systems, cf. Theorem \ref{mainresult}. Finally, we show that  using an additional input channel with an appropriately chosen entangled input ensures that the system is always globally minimal.

\subsection{Preliminaries and notation}

We use the following notations: 
``Tr'' and ``Det'' denotes the trace and determinant 
of a matrix, respectively. 
For a matrix $X=(X_{ij})$ the  symbols:
$X^{\#}=(X_{ij}^*)$,  $X^{T}=(X_{ji})$, $X^{\dag}=(X_{ji}^*)$
represent the complex conjugation, transpose, and adjoint matrix respectively, where ``*'' indicates complex conjugation. We also use the doubled-up notation $\breve{X}:=\left[X^T, (X^{\#})^T\right]^T$ and $\Delta(A, B):=\left[A, B; B^{\#}, A^{\#}\right]$. For example, we may write the transformation $Y=AX+BX^{\#}$ in \textit{doubled-up} form as $\breve{Y}=\Delta(A,B)\breve{X}$. For a matrix $Z\in\mathbb{R}^{2n\times2m}$ define $Z^\flat=J_mZ^{\dag}J_n$, where $J_n=\left[\mathds{1}_n, 0; 0, -\mathds{1}_n\right]$. $\mathrm{Spec}(X)$ is the set of all distinct eigenvalues of $X$. A similar notation is used for matrices of operators. 
We use ``$\mathds{1}$'' to represent the identity matrix or operator. $\delta_{jk}$ is Kronecker $\delta$ and $\delta(t)$ is Dirac $\delta$.
The commutator is denoted by $[\cdot, \cdot]$.

\begin{defn}\label{def.symplectic}
A matrix $S\in\mathbb{C}^{2m\times 2m}$ is said to be \textit{$\flat$- unitary} if it is invertible and satisfies
\[S^\flat S=SS^\flat =\mathds{1}_{2m}.\]

If additionally, $S$ is of the form $S=\Delta(S_-, S_+)$ for some $S_-, S_+\in\mathbb{R}^{m\times m}$ then we say that it is \textit{symplectic}. Such matrices form a group called the \textit{symplectic group} \cite{squeezing, ultracoherence}.
\end{defn}


\section{Quantum Linear systems}\label{red1}

In this section we briefly review the QLS theory, highlighting along 
the way results that will be relevant for this paper. 
We refer to \cite{GZ} for a more detailed discussion on the input-output 
formalism, and  to the review papers \cite{peter, path, path2, nurdin} 
for the theory of linear systems.

\subsection{Time-domain representation}\label{timed}

A linear input-output quantum system is defined as a continuous 
variables (cv) system coupled to a Bosonic environment, such that their joint 
evolution is linear in all canonical variables. 
The system is described by the column vector of annihilation operators, 
$\mathbf{a}:=[\mathbf{a}_1,\mathbf{a}_2, \dots, \mathbf{a}_n]^T$, 
representing the $n$ cv modes. 
Together with their respective creation operators 
$\mathbf{a}^{\#}:=[\mathbf{a}^{\#}_1,\mathbf{a}^{\#}_2, \dots, \mathbf{a}^{\#}_n]^T$ 
they satisfy the canonical commutation relations (CCR)
$
\left[\mathbf{a}_i, \mathbf{a}^{*}_j\right]=\delta_{i j}\mathds{1}.
$
We denote by $\mathcal{H}:= L^2(\mathbb{R}^n)$ the Hilbert space of the system carrying the standard representation of the 
$n$ modes. The environment is modelled by $m$ bosonic fields, called \textit{input channels}, 
whose fundamental variables are the fields 
$\mathbf{B}(t):=\left[\mathbf{B}_{1}(t), \mathbf{B}_{2}(t), \ldots, 
\mathbf{B}_{m}(t)\right]^T$, where $t\in \mathbb{R}$ represents time. 
The fields satisfy the CCR 
\begin{eqnarray}
\left[\mathbf{B}_{i}(t), \mathbf{B}^{\#}_{j}(s)\right]=\mathrm{min}\{t,s\}\delta_{ij}\mathds{1}.
\end{eqnarray}
Equivalently, this can be written as 
$\left[\mathbf{b}_{i}(t),\mathbf{b}_{j}^{\#}(s)\right]
=\delta(t-s)\delta_{ij}\mathds{1}$, where 
$\mathbf{b}_{i}(t)$ are the infinitesimal (white noise) annihilation operators formally defined as 
$\mathbf{b}_{i}(t):={d}\mathbf{B}_{i}(t)/{d}t$ \cite{peter}. 
The operators can be defined in a standard fashion on the Fock space 
$\mathcal{F}= \mathcal{F}(L^2(\mathbb{R})\otimes \mathbb{C}^m)$ \cite{LB}. For most of  the paper we consider the scenario where the input is prepared in a \emph{pure, stationary in time, mean-zero, Gaussian state} with independent increments characterized by the covariance matrix 
\begin{align}\label{Ito}
\nonumber\left<\begin{smallmatrix}\mathbf{B}(t)\mathbf{B}(t)^{\dag}&{d}\mathbf{B}(t){d}\mathbf{B}(t)^T\\{d}\mathbf{B}^{\#}(t){d}\mathbf{B}(t)^{\dag}&{d}\mathbf{B}^{\#}(t){d}\mathbf{B}(t)^T\end{smallmatrix}\right>&=\left(\begin{smallmatrix} N^T+\mathds{1}&M\\M^{\dag}&N\end{smallmatrix}\right){d}t
\\&:={V}(N,M){d}t,
\end{align}
where the brackets denote a quantum expectation. 
Note that $N=N^{\dag}$, $M=M^T$, and $V\geq0$,   which ensures that the state does not violate the uncertainty principle. The state's purity can be characterized in terms of the symplectic eigenvalues of $V$, as will be discussed in Sec. \ref{powerspectrum}. In particular, $N=M=0$ corresponds to the  vacuum state, while pure squeezed states for  single-input-single-output (SISO) systems (i.e., $m=1$)  satisfy $|M|^2= N(N+1)$. More generally, we consider a nonstationary scenario where the input state has time-dependent mean $\langle \mathbf{B}(t)\rangle$, e.g., a coherent state with time-dependent amplitude. For more details on Gaussian states see \cite{adduce, weedbrook}.


The dynamics of a general input-output system is determined by the system's Hamiltonian and its coupling to the environment. In the Markov approximation, the joint unitary evolution of system and environment is described by the (interaction picture) unitary ${\bf U}(t)$ on the joint space 
$\mathcal{H}\otimes \mathcal{F}$, which is the solution of the quantum stochastic differential equation  \cite{LB,Dong,GZ,path,path2}
\begin{eqnarray}
\label{eq.QSDE} 
&&{d}\mathbf{U}(t) 
         :=\mathbf{U}(t+{d}t)-\mathbf{U}(t) \\
          &&=\left(-i \mathbf{H}{d}t             + \mathbf{L}{d}\mathbf{B}(t)^{\dag}
               - \mathbf{L}^{\dag}{d}\mathbf{B}(t) 
               -\frac{1}{2}\mathbf{L}^{\dag}\mathbf{L}dt\right)\mathbf{U}(t),\nonumber
\end{eqnarray}               
%
with initial condition ${\bf U}(0)= \mathbf{I}$. 
Here, ${\bf H}$ and ${\bf L}$ are system operators describing the 
system Hamiltonian and coupling to the fields; 
${d}{\bf B}_i(t), {d}{\bf B}_i^{\#}(t)$, 
are increments of fundamental quantum stochastic processes describing the 
creation and annihilation operators in the  input channels.  

For the special case of \emph{linear} systems, 
the coupling and Hamiltonian operators are of the form  
\begin{eqnarray*}
\mathbf{L} &=& C_{-}\mathbf{a}+C_{+}\mathbf{a}^{\#},\\
\mathbf{H}&=&  \mathbf{a}^{\dag}\Omega_{-}\mathbf{a}+\frac{1}{2}\mathbf{a}^T\Omega_{+}^{\dag}\mathbf{a}+\frac{1}{2}\mathbf{a}^{\dag}\Omega_{+}\mathbf{a}^{\#},
\end{eqnarray*}
for $m\times n$ matrices $C_{-}, C_{+}$ and $n\times n$ matrices $\Omega_{-}, \Omega_{+}$ satisfying $\Omega_{-}=\Omega_{-}^{\dag}$ and $\Omega_{+}=\Omega_{+}^T$. 

As shown below, this ensures that all canonical variables evolve linearly in time. Indeed, let $\mathbf{a}(t)$ and 
$\mathbf{B}^{out}(t)$ be the Heisenberg evolved system and output variables 
\begin{eqnarray}
       \mathbf{a}(t):=\mathbf{U}(t)^{\dag}\mathbf{a}\mathbf{U}(t),~~~ 
       \mathbf{B}^{out}(t):=\mathbf{U}(t)^{\dag}\mathbf{B}(t)\mathbf{U}(t).
\end{eqnarray}
By using the QSDE \eqref{eq.QSDE} and the Ito rules (\ref{Ito}) one can obtain the 
following Ito-form quantum stochastic differential equation of the QLS in the doubled-up notation \cite{squeezing}
\begin{eqnarray}\label{langevin}
      {d}\breve{\bf a}(t) &=& 
            A \breve{\bf a}(t){d}t-C^{\flat}{d} \breve{\bf B}(t),\\
       {d} \breve{\bf B}^{out}(t) &=& 
            C \breve{\bf a}(t){d}t+{d} \breve{\bf B}(t),
\end{eqnarray}
where $ \breve{\bf a} := ({\bf a}^T, {\bf a^{\#}}^T)^T$,  $C:=\Delta\left(C_{-}, C_{+}\right)$, and $A:=\Delta\left(A_{-}, A_{+}\right)=-\frac{1}{2}C^\flat C-iJ_n\Omega$ with $\Omega= \Delta\left(\Omega_{-}, \Omega_{+}\right)$ and
\[
A_{\mp}:=-\frac{1}{2}\left(C_{-}^{\dag}C_{\mp}-C_{+}^{T}C_{\pm}^{\#}\right)-i\Omega_{\mp}.
\] 
It is important to note that not all choices of  $A$ and $C$ may be physically realizable as open quantum systems \cite{JNP}.

A special case of linear systems is that  of \textit{passive} quantum linear systems (PQLSs) for which $C_{+}=0$ and 
$\Omega_{+}=0$, whose system identification theory was studied in \cite{Guta2}. We will return to this important class along the way.
This type of system often arises in applications, and includes optical cavities and beam splitters.

\subsection{Controllability and observability}
By taking the expectation with respect to the initial joint system state of Eqs. (\ref{langevin}) we obtain the following classical linear system
\begin{eqnarray}\label{classicallangevin}
       {d}\left<\breve{\bf a}(t) \right>= 
            A\left< \breve{\bf a}(t) \right>{d}t-C^{\flat}{d}\left<\breve{\bf B}(t)\right>,\\
       {d}\left<\breve{\bf B}^{out}(t)\right> = 
            C\left<\breve{\bf a}(t) \right>{d}t+{d}\left<\breve{\bf B}(t)\right>.
\end{eqnarray}

\begin{defn}
The quantum linear system (\ref{langevin}) is said to be Hurwitz stable (respectively controllable, observable) if the corresponding classical system (\ref{classicallangevin}) is Hurwitz stable (respectively controllable, observable).
\end{defn}
In general, for a quantum linear system observability and controllability are equivalent \cite{Indep}. A system possessing one (and hence both) of these properties is called \textit{minimal}. Checking minimality comes down to verifying that the rank of the following observability matrix is $2n$:
\[\mathcal{O}=[C^T, (CJ_n\Omega)^T,\hdots, \left(C(J_n\Omega)^{2n-1}\right)^T]^T,\]
where $\Omega=\Delta(\Omega_{-},  \Omega_{+})$.
In the case of passive systems Hurwitz stability is further equivalent to minimality of the system \cite{Guta2}. However for active systems, although the statement [Hurwitz $\implies$ minimal]  is true \cite{Naoki}, the converse statement ([minimal $\implies$ Hurwitz]) is not necessarily so. We see this by means of a counterexample.   
\begin{exmp}
Consider a general one-mode SISO QLS, which is parametrizsed by $\Omega=\Delta(\omega_{-},  \omega_{+})$ and $C=\Delta\left(c_-, c_+\right)$.
The system is Hurwitz stable (i.e. the eigenvalues of $A$ have a strictly negative real part) if and only if
\begin{enumerate}
\item[(1)] $|c_-|>|c_+|$ and $|\omega_-|\geq|\omega_+|$, or
\item[(2)] $|\omega_+|>|\omega_-|$  and $\sqrt{|\omega_+|^2-|\omega_-|^2}<\frac{1}{2}\left(|c_-|^2-|c_+|^2\right)$.
\end{enumerate}
A system is nonminimal if and only if the following matrix has rank less than 2: 
\[
\left[
\begin{smallmatrix}
C\\CJ_n\Omega
\end{smallmatrix}
\right]=
\left[\begin{array}{cc}
c_-&c_+
\\
{c_+}^{\#}&{c_-}^{\#}
\\
c_-\omega_- -c_+{\omega_+}^{\#} &c_-\omega_+ -c_+\omega_-
\\ 
{c_+}^{\#}\omega_- -{c_-}^{\#}{\omega_+}^{\#}   &  {c_+}^{\#}\omega_+ -{c_-}^{\#}\omega_-
\end{array}\right]. 
\]

Clearly it is possible for a system to be \{minimal\}$\cap$\{Hurwitz\} or \{non-minimal\}$\cap$\{non-Hurwitz\}. Further, for a  counterexample to the statement:  [minimal $\implies$ Hurwitz] consider for example   $|c_+|>|c_-|$ with $\omega_+=\omega_-$.
\end{exmp}

In light of the previous example, we make the physical assumption that all systems considered throughout this paper are Hurwitz (hence minimal).

\subsection{Frequency-domain representation}\label{freqrep}

For linear systems it is often useful to switch from the time domain dynamics 
described above, to the frequency domain picture. 
Recall that the Laplace transform of a generic process ${\bf x}(t)$ is defined by
\begin{equation}\label{eq.laplace.classic}
 \mathbf{x}(s) := \mathcal{L}[\mathbf{x}](s)
      =\int_{-\infty}^\infty e^{-st}{\bf x}(t)dt, 
\end{equation}
where $s\in\mathbb{C}$. 
In the Laplace domain the input and output fields are related as follows 
\cite{Yanagisawa}:
\begin{equation}
\breve{\bf b}^{out}(s) = 
\Xi(s)
 \breve{\bf b}(s),
\label{iol}
\end{equation}
where $ \Xi(s)$ is the  \emph{transfer function matrix} of the system%
\begin{equation}
\label{eq.transfer.function.general}
       \Xi(s)=\Big\{\mathds{1}_m-C(s\mathds{1}_n-A)^{-1}C^{\flat}\Big\}=
       \left(\begin{smallmatrix}
       \Xi_{-}(s)&\Xi_{+}(s)
       \\
       \Xi_{+}(s^{\#})^{\#}& \Xi_{-}(s^{\#})^{\#}
       \end{smallmatrix}\right).
\end{equation}

In particular, the frequency domain input-output 
relation is
$\breve{\bf b}^{out}(-i\omega) = 
\Xi(-i\omega)
 \breve{\bf b}(-i\omega).$
The corresponding commutation relations are
$\left[\mathbf{b}(-i\omega),\mathbf{b}(-i\omega')^{\#}\right]
=i\delta(\omega-\omega')\mathds{1}$, and similarly for the output modes \footnote{Note that the position of the conjugation sign is important here because in general  $\mathbf{b}(-i\omega')^{\#}$ and $\mathbf{b}^{\#}(-i\omega')$ are not the same, cf. Definition (\ref{eq.laplace.classic}).}. 
As a consequence, the transfer matrix $\Xi(-i\omega)$ is symplectic for all frequencies $\omega$ \cite{squeezing}. 

More generally one may allow for static scattering (implemented by passive optical components such as beamsplitters) or static squeezing processes to act on the interacting field before interacting with the system. The corresponding transfer function is obtained by multiplying the transfer function \eqref{eq.transfer.function.general} with the scattering or squeezing symplectic matrix $S$ on the right \cite{squeezing}. 

In the case of passive systems,  $\Xi_{+}(s)\equiv0$ and so  the doubled-up notation is no longer necessary; the input-output relation becomes \cite{Yanagisawa,Guta2}
\begin{equation}
\label{iolp}
\mathbf{ b}^{out}(s) = 
\Xi(s)
 \mathbf{ b} (s),
\end{equation}
where the transfer function is given by
\begin{equation}
\label{eq.transfer.function}
       \Xi(s)=\Big\{\mathds{1}_m-C_{-}(s\mathds{1}_n-A_{-})^{-1}C_{-}^{\dag}\Big\}S,
\end{equation}
which is unitary for all $s= -i\omega\in i \mathbb{R}$.
In the case of passive systems we write the triple determining the evolution as 
$\left(S, C_-, \Omega_{-}\right)$, where the scattering matrix $S$ is unitary.

Finally, we note that while the transfer function is uniquely determined by the triple $(S, C, \Omega)$, the converse  statement is not true, as discussed in detail in the next section.

\section{Transfer function identifiability}\label{identifiability}


\subsection{Identifiability classes}

We now consider the following general question: which dynamical parameters 
of a QLS can be identified by observing the output fields for appropriately 
chosen input states? 
This is the quantum analog of the classical system identification problem addressed in \cite{kalman, Ho, anders}. 
The input-output relation \eqref{iol} shows that the experimenter can at most  
identify the transfer function $\Xi(s)$ of the system. 
Systems which have the same transfer function are called {\it equivalent} and belong to the same \emph{equivalence class}.

Before answering this question for general QLSs we discuss the case of passive QLSs considered in 
\cite{Guta2}.
The transfer function in Eq. \eqref{iolp} can be identified by sending a coherent input signal of a given frequency 
$\omega$ and known amplitude $\alpha(\omega)$, and measuring the 
output state, which is a coherent state of the same frequency and amplitude 
$\Xi (-i\omega) \alpha(\omega)$. 

In the case of passive systems it is known that
two minimal systems with parameters $(\Omega,C,S)$ and $(\Omega', C',S')$ are 
equivalent if and only if their parameters are related by a unitary transformation, 
i.e. $C'=CT$ and $\Omega'=T\Omega T^{\dag}$ for some $n\times n$ unitary 
matrix $T$, and $S= S'$. 
The first part of this result was shown in \cite{Guta2}; the fact that the scattering 
matrices must be equal follows by choosing $s=-i\omega$ and taking the limit 
$\omega\to \infty$ in Eq. (\ref{eq.transfer.function}). 
Physically, this means that at frequencies far from the internal frequencies 
of the system, the input-output is dominated by the scattering or squeezing between the 
input fields. 
Our first main result is to extend this result to general (active) linear systems.

\begin{thm}\label{symplecticequivalence}
Let $\left({S}, C, \Omega\right)$ and $\left({S'}, C', \Omega'\right)$ be two minimal, and stable QLSs. 
Then they have the same transfer function if and only if there exists a symplectic matrix $T$ such that 
\begin{equation}\label{eq.equivalene.classes}
J_n\Omega'=TJ_n\Omega T^{\flat}, \,\,\, C'=CT^{\flat}\,\,\,S=S'.
\end{equation}
\end{thm}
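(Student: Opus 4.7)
My plan is to reduce the problem to the classical Kalman minimal realization theorem, which produces an invertible similarity $T$, and then force $T$ to be symplectic using the physical realizability structure of a QLS. The ``if'' direction is routine: given a symplectic $T$ satisfying \eqref{eq.equivalene.classes}, one has $T^{-1} = T^\flat$, so a short computation using $A = -\tfrac{1}{2}C^\flat C - iJ_n\Omega$ shows $A' = TAT^\flat$ and $(C')^\flat = TC^\flat$; substituting into \eqref{eq.transfer.function.general} gives $\Xi'(s) = \Xi(s)$.

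For the ``only if'' direction, I would treat each QLS as a classical state-space system with matrices $(A,\,-C^\flat S,\,C,\,S)$. Taking $s\to\infty$ in \eqref{eq.transfer.function.general} forces $S = S'$, and the classical Kalman theorem then produces a unique invertible $T\in\mathbb{C}^{2n\times 2n}$ with $A' = TAT^{-1}$, $(C')^\flat = TC^\flat$, and $C' = CT^{-1}$. Taking $\flat$ of the middle equation and comparing with the third yields $C(T^\flat T - \mathds{1}_{2n}) = 0$. The task is now to show $T^\flat T = \mathds{1}_{2n}$ and that $T$ has doubled-up block form.

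The main obstacle is upgrading the single relation $C(T^\flat T - \mathds{1}_{2n}) = 0$ to one strong enough to invoke minimality. I would exploit the physical realizability identity $A + A^\flat = -C^\flat C$, which follows from $(iJ_n\Omega)^\flat = -iJ_n\Omega$ (using $J_n^\flat = J_n$ and $\Omega^\flat = J_n\Omega J_n$ since $\Omega = \Omega^\dagger$). Substituting $(C')^\flat C' = TC^\flat CT^{-1}$ and $A' = TAT^{-1}$ into the primed realizability identity and cancelling the common $TAT^{-1}$ term leaves $(T^{-1})^\flat A^\flat T^\flat = TA^\flat T^{-1}$, which rearranges to the commutation $A^\flat(T^\flat T) = (T^\flat T)A^\flat$. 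Iterating gives $C(A^\flat)^k(T^\flat T - \mathds{1}_{2n}) = 0$ for every $k$. The observability matrix of $(A^\flat,C)$ has full column rank because the unobservable subspace of $(A,C)$ lies in $\ker C$, hence is invariant under both $A$ and $A^\flat = -A - C^\flat C$, so it agrees with the unobservable subspace of $(A^\flat,C)$ and is trivial by minimality. Therefore $T^\flat T = \mathds{1}_{2n}$, i.e.\ $T$ is $\flat$-unitary.

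For the doubled-up block structure, let $K$ denote the $2n\times 2n$ swap exchanging the two $n$-blocks; every doubled-up matrix $X$ satisfies $X = KX^\# K$. Taking complex conjugates of the three Kalman relations then shows that $KT^\# K$ satisfies the same relations as $T$, and by uniqueness $T = KT^\# K$, i.e.\ $T = \Delta(T_-,T_+)$. Finally, substituting $A' = TAT^\flat$ into the defining parameterizations of $A$ and $A'$ and cancelling the $\tfrac12 C^\flat C$ terms produces $J_n\Omega' = TJ_n\Omega T^\flat$, as required.
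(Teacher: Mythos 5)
Your proof is correct, and its overall skeleton is the same as the paper's: reduce to the classical Kalman similarity $(A'=TAT^{-1},\ (C')^\flat=TC^\flat,\ C'=CT^{-1})$, extract $S=S'$ from the $s\to\infty$ limit, and then use minimality to force $T^\flat T=\mathds{1}$ and the doubled-up block form. The two sub-steps, however, are executed by genuinely different arguments. For $T^\flat T=\mathds{1}$, the paper first shows $J_n\Omega'=TJ_n\Omega T^{-1}$, deduces $[T^\flat T, J_n\Omega]=0$ from $(J_n\Omega)^\flat=J_n\Omega$, and applies the observability matrix built from powers of $J_n\Omega$; you instead invoke the physical realizability identity $A+A^\flat=-C^\flat C$ to get $[T^\flat T, A^\flat]=0$ and then argue that $(A^\flat,C)$ inherits observability from $(A,C)$ via the invariant-subspace characterization. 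Both are valid; yours avoids introducing $\Omega'$ until the very end, at the cost of the small extra lemma identifying the unobservable subspaces of $(A,C)$ and $(A^\flat,C)$. For the doubled-up structure of $T$, the paper performs an explicit block computation on $CA^k$ and concludes via full rank of $\mathcal{O}$, whereas you exploit the uniqueness of the Kalman similarity between minimal realizations together with the symmetry $X=KX^{\#}K$ characterizing doubled-up matrices, so that $KT^{\#}K$ satisfies the same three relations and must equal $T$. This is cleaner and shorter, but note that it leans on the uniqueness clause of the realization theorem (which does hold for minimal realizations and follows from full rank of the observability and controllability matrices), a fact the paper's block computation effectively reproves by hand.
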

\begin{proof}
Firstly, using the same  argument as above, the scattering or squeezing matrices $S$ and $S'$ must be equal.

It is known \cite{Ljung} that two minimal classical linear systems
\[d\mathbf{x}(t)=A\mathbf{x}(t)dt+B\mathbf{u}(t)dt,\,\,\,d\mathbf{y}(t)=C\mathbf{x}(t)dt+D\mathbf{u}(t)dt\]
and 
\[d\mathbf{x}(t)=A'\mathbf{x}(t)dt+B'\mathbf{u}(t)dt,\,\,\,d\mathbf{y}(t)=C'\mathbf{x}(t)dt+D'\mathbf{u}(t)dt\]
for input $\mathbf{u}(t)$, output $\mathbf{y}(t)$, and system state $\mathbf{x}(t)$ have the same transfer function if and only if
\[A'=TAT^{-1},\,\,\, B'=TB,\,\,\,C'=CT^{-1},\,\,\,D'=D\]
for some invertible matrix $T$.
Hence, for our setup $C\left(s\mathds{1}-A\right)^{-1}C^{\flat}=C'\left(s\mathds{1}-A'\right)^{-1}C'^{\flat}$
if and only if there exists an invertible matrix $T$ such that 
  \[A'=TAT^{-1},\,\,\, C'^{\flat}=TC^{\flat},\,\,\,C'=CT^{-1}.\]
Note that at this stage $T$ is not assumed to be symplectic.
The second and third conditions imply $C=C\left(T^\flat T\right)$, which further implies that $\lbrack T^\flat T,C^{\flat}C\rbrack=0$. Now by earlier definitions $A=-\frac{1}{2}C^\flat C-iJ_n\Omega$, so that the second and third conditions applied to the first condition imply that $J_n\Omega'=TJ_n\Omega T^{-1}$. Next, using this and the  observation $\left(J_n\Omega\right)^\flat =J_n\Omega$ it follows that 
$\lbrack T^\flat T, J_n\Omega\rbrack=0.$

Now, $C\left(J_n\Omega\right)^k=C\left(T^\flat T\right)\left(J_n\Omega\right)^k=C\left(J_n\Omega\right)^k\left(T^\flat T\right)$ which means that the minimality matrix $\mathcal{O}$ satisfies $\mathcal{O}=\mathcal{O}T^\flat T$. Because the system is minimal $\mathcal{O}$ must be full rank, hence $T^\flat T=\mathds{1}$.

Finally, it remains to show that the matrix $T$ generating the equivalence class is of the form 
\[T=\left(\begin{smallmatrix} T_1 &T_2\\T_2^{\#}&T_1^{\#}\end{smallmatrix}\right).\]
To see this, observe that $CA^k$, $C'A'^k$ must be of the of this doubled up form for $k\in\{0,1,2,\hdots\}$. Writing $CA^k$, $C'A'^k$, and $T$ as $\left(\begin{smallmatrix} P_{(k)} &Q_{(k)}\\Q_{(k)}^{\#}&P_{(k)}^{\#}\end{smallmatrix}\right)$, $\left(\begin{smallmatrix} P'_{(k)} &Q'_{(k)}\\Q_{(k)}'^{\#}&P_{(k)}'^{\#}\end{smallmatrix}\right)$ and $T=  \left(\begin{smallmatrix} T_1 &T_2\\T_3&T_4\end{smallmatrix}\right)$, and using the above result, $C'A'^k= CA^k T^\flat $, 
it follows that 
\[P_{(k)}(T_1^{\dag}-T_4^T)+Q_{(k)}(T_3^T-T_2^{\dag})=0\]
and
\[Q_{(k)}^{\#}(T_1^{\dag}-T_4^T)+P_{(k)}^{\#}(T_3^T-T_2^{\dag})=0.\]
Hence 
\[\mathcal{O}\left[\begin{smallmatrix}    T^{\dag}_1-T_4^T\\  T_3^T-T_2^{\dag}\end{smallmatrix}\right]=0\] and so using the fact that $\mathcal{O}$ is full rank gives the required result. 
\end{proof}

Therefore, without any additional information, we can at most identify the equivalence 
class of systems related by a symplectic transformation (on the system). 
Note that the above transformation of the system matrices is equivalent  to a change of co-ordinates $\breve{\mathbf{a}}\mapsto T^\flat \breve{\mathbf{a}}$ in Eq. (\ref{langevin}).

\subsection{Identification method}\label{id:method}

 Suppose that we have constructed the transfer function from the input-output data, using for instance one of the techniques of \cite{Ljung} and \footnote{Typically this can be  done by probing the system with a known input (e.g., a coherent state with a time-dependent amplitude) and performing a measurement (e.g., homodyne or heterodyne measurement) on the output field and post-processing the data (e.g., using maximum likelihood or some other classical method \cite{Ljung}).}.

Here  we a  outline a method  to construct a system realization directly from the transfer function, for a general SISO  quantum linear system. The realization is obtained indirectly by first finding a non-physical realization and then constructing a physical one from this by applying a criterion developed in \cite{Indep}.  
The construction follows similar lines to the method described in \cite{Guta2} for passive systems. 


Let  $(A_0, B_0, C_0)$ be a triple of doubled-up matrices which constitute a  minimal realization of $\Xi(s)$, i.e.,  
\begin{equation}
\label{TF1}
\Xi(s)=\mathds{1}+C_0(sI-A_0)^{-1}B_0.
\end{equation}
For example, in  Appendix \ref{APP1} such a realization is found for an $n$-mode minimal SISO system, with matrices
$(A,C)$, possessing $2n$ distinct poles each with a non-zero imaginary
part.
Any other realization of the transfer function can be generated via a similarity transformation
\begin{equation}\label{trivialtrans}
A=TA_0T^{-1}\,\,\, B=TB_0 \,\,\, C=C_0T^{-1}.
\end{equation}
The problem here  is that in general these matrices may not describe a genuine quantum system in the sense that from a given $A, B, C$ one cannot reconstruct the pair $(\Omega, C)$. Our goal is to find a special transformation $T$ mapping $(A_0, B_0, C_0)$ to a triple $(A, B, C)$ that does represent a genuine quantum system. Such triples are characterized by the following \textit{physical realizability conditions} \cite{Indep}
\begin{equation}\label{PR}
A+A^{\flat}+C^{\flat}C=0 \,\,\, \mathrm{and}\,\,\, B=-C^\flat.
\end{equation}
Therefore, substituting \eqref{trivialtrans} into the left equation of \eqref{PR} one finds 
\begin{equation}\label{lom}
\left(T^{\dag}JT\right)A_0+A_0^{\dag}\left(T^{\dag}JT\right)+C^{\dag}_0JC_0=0 ,
\end{equation}
where the matrices $J$ here are of appropriate dimensions. 

Next, because the system is assumed to be stable it follows from \cite[Lemma 3.18]{zhou} that Eq. \eqref{lom} is equivalent to 
\begin{equation}\label{solute}
T^\flat T= J \left(T^{\dag}JT\right)=   \int^{\infty}_0 J\left(C_0e^{A_0 t}\right)^{\dag}J\left(C_0e^{A_0t}\right)dt.
\end{equation}

We now need to use a result from \cite{cascade2}, which is a sort of singular value decomposition for symplectic matrices. We state the result in a slightly different way here. 
\begin{Lemma}\label{peterf}
Let $N^{2n\times 2n}$ be a complex, invertible, doubled-up matrix and let $\mathcal{N}=N^\flat N$.
\begin{enumerate}
\item[(1)] Assume that all eigenvalues of $\mathcal{N}$ are semisimple\footnote{An eigenvalue, $\lambda$ is said to be \textit{semisimple} if its geometric multiplicity equals its algebraic multiplicity. That is, the dimension of the eigenspace associated with  $\lambda$ is equal to the multiplicity of $\lambda$ in the characteristic polynomial.}. Then there exists a symplectic matrix $W$ such that $\mathcal{N}=W\hat{N}W^\flat$ where $\hat{N}=\left(\begin{smallmatrix} \hat{N}_1&\hat{N}_2\\\hat{N}_2^{\#}&\hat{N}_1^{\#}\end{smallmatrix}\right)$ with
\[\hat{N}_1=\mathrm{diag}\left(\lambda^{+}_1, ..., \lambda^{+}_{r_1}, \lambda^{-}_1, ..., \lambda^{-}_{r_2}, \mu_1 \mathds{1}_2, ..., \mu_{r_3} \mathds{1}_2\right)\]
\[\hat{N}_2=\mathrm{diag}\left(0, ..., 0, 0, ..., 0, -\nu_1 \sigma , ..., -\nu_{r_3} \sigma\right).\]
Here $\lambda_{i}^{+}>0$, $\lambda_{i}^{-} <0$ and $\lambda^{c}_i := \mu_i+i\nu_i$ (with $\mu_i, \nu_i\in\mathbb{R}$ $\nu_i >0$) are the eigenvalues of $\mathcal{N}$. The matrix $\sigma=\left(\begin{smallmatrix} 0&-i\\i&0\end{smallmatrix}\right)$ is one of the Pauli matrices and $\mathds{1}_2$ is the identity. 

\item[(2)]
There exists another symplectic matrix $V$ such that $N=V\bar{N}W^\flat$ where $\bar{N}$ is the factorization of $\hat{N}$ $\left(\hat{N}=\bar{N}^\flat\bar{N}\right)$ given by
$\bar{N}=\left(\begin{smallmatrix} \bar{N}_1&\bar{N}_2\\\bar{N}_2^{\#}&\bar{N}_1^{\#}\end{smallmatrix}\right)$ with
\[\bar{N}_1=\mathrm{diag}\left(\sqrt{\lambda^{+}_1}, \hdots, \sqrt{\lambda^{+}_{r_1}}, 0, \hdots, 0, \alpha_1 \mathds{1}_2, \hdots, \alpha_{r_3} \mathds{1}_2\right)\]
\[\bar{N}_2=\mathrm{diag}\left(0, \hdots, 0, \sqrt{|\lambda^{-}_1|}, \hdots, \sqrt{|\lambda^{-}_{r_2}|},-\beta_1 \sigma , \hdots,-\beta_{r_3} \sigma\right).\]
The coefficients $\alpha_i$ and $\beta_i$ are determined from $\mu_i$ and $\nu_i$ via
\begin{itemize}
\item[(i)] If $\mu_i\geq0$, then $\alpha_i=\sqrt{\mu_i}\mathrm{cosh}x_i$, $\beta_i=\sqrt{\mu_i}\mathrm{sinh}x_i$, with $x_i=\frac{1}{2}\mathrm{sinh}^{-1}\frac{\nu}{\mu}$.
\item[(ii)] If $\mu_i\leq0$, then $\alpha_i=\sqrt{|\mu_i|}\mathrm{sinh}x_i$, $\beta_i=\sqrt{|\mu_i|}\mathrm{cosh}x_i$, with $x_i=\frac{1}{2}\mathrm{sinh}^{-1}\frac{\nu}{|\mu|}$.
\item[(iii)] If $\mu_i=0$, then $\alpha_i=\beta_i=\sqrt{\frac{\nu_i}{2}}$.
\end{itemize}
\end{enumerate}
\end{Lemma}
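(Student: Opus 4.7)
The argument splits naturally into two stages matching the lemma: Part~1 is a symplectic normal form result for $\flat$-self-adjoint doubled-up matrices, and Part~2 is an explicit block factorisation of $\hat{N}$ followed by a polar-like extraction of $V$.

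For Part~1, the key structural facts are that $\mathcal{N}=N^\flat N$ is $\flat$-self-adjoint (because $(N^\flat N)^\flat=N^\flat(N^\flat)^\flat=N^\flat N$) and doubled-up (since both $N$ and $N^\flat$ are). The doubled-up structure intertwines $\mathcal{N}$ with $\mathcal{N}^{\#}$ via the flip $\Sigma=\left(\begin{smallmatrix}0 & \mathds{1}\\ \mathds{1} & 0\end{smallmatrix}\right)$, forcing $\mathrm{Spec}(\mathcal{N})$ to be invariant under complex conjugation. So the eigenvalues split into real values---which I would further separate by sign into the $\lambda_i^+>0$ and $\lambda_i^-<0$ families---and complex conjugate pairs $\mu_i\pm i\nu_i$ with $\nu_i>0$. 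Under the semisimplicity hypothesis, $\mathbb{C}^{2n}$ decomposes as a direct sum of ordinary eigenspaces, and the doubled-up flip preserves the real ones while exchanging the conjugate complex pairs. I would then construct the columns of $W$ by a symplectic Gram--Schmidt procedure: within each real eigenspace use the $\flat$-form (non-degenerate because $N$ is invertible) to pick a basis giving a scalar diagonal entry in $\hat{N}_1$; for each complex conjugate pair use basis vectors that are interchanged by the doubled-up involution, which produces exactly the $\mu_i\mathds{1}_2$ block in $\hat{N}_1$ together with the $-\nu_i\sigma$ block in $\hat{N}_2$. The resulting $W$ is doubled-up and $\flat$-unitary by construction, hence symplectic.

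For Part~2 I would first verify $\hat{N}=\bar{N}^\flat\bar{N}$ block by block. The positive-real blocks yield $\lambda_i^+$ directly, while the sign flip producing $\lambda_i^-$ comes from the $-\bar{N}_2^T\bar{N}_2^{\#}$ term in the top-left of $\bar{N}^\flat\bar{N}$. For each complex $2\times 2$ block, using $\sigma^T=\sigma^{\#}=-\sigma$ and $\sigma^2=\mathds{1}_2$, the identity reduces to the two scalar conditions
\[
\alpha_i^2-\beta_i^2=\mu_i,\qquad 2\alpha_i\beta_i=\nu_i,
\]
which are exactly the hyperbolic double-angle identities encoded in $x_i=\tfrac{1}{2}\sinh^{-1}(\nu_i/|\mu_i|)$ (and checked directly in the $\mu_i=0$ case). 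With this factorisation in hand, set $M:=\bar{N}W^\flat$, which is doubled-up and satisfies $M^\flat M=W\bar{N}^\flat\bar{N}W^\flat=W\hat{N}W^\flat=\mathcal{N}=N^\flat N$. Defining $V:=NM^{-1}$, a short computation gives $V^\flat V=M^{-\flat}N^\flat N M^{-1}=M^{-\flat}M^\flat M M^{-1}=\mathds{1}_{2n}$, and $V$ is doubled-up as a product of doubled-up factors; thus $V$ is symplectic and $N=VM=V\bar{N}W^\flat$, as required.

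The main obstacle is Part~1, and specifically the symplectic bookkeeping for the complex conjugate eigenspaces: one needs basis vectors that are simultaneously eigenvectors of $\mathcal{N}$, $\flat$-orthonormal with the correct signs, and exchanged by the doubled-up involution so that $W$ itself comes out doubled-up rather than merely $\flat$-unitary. This is precisely the content of the symplectic normal form established in \cite{cascade2}, which I would invoke rather than re-derive from scratch; once it is granted, the rest of the lemma reduces to the routine block algebra and the polar-like construction of $V$ described above.
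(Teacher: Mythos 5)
The paper does not actually prove this lemma: it is stated verbatim (in slightly different notation) as a result imported from \cite{cascade2}, so there is no internal proof to compare against. Your proposal is consistent with that, since you correctly identify that the only genuinely hard content is Part~1 --- producing a diagonalising $W$ that is simultaneously $\flat$-unitary and doubled-up, i.e.\ symplectic, which requires the careful pairing of conjugate eigenspaces you describe --- and you defer exactly that step to \cite{cascade2}, just as the paper does for the whole lemma. The parts you do supply check out: $\mathcal{N}=N^\flat N$ is indeed $\flat$-self-adjoint and doubled-up, forcing the conjugation-symmetric spectrum; the block identity $\hat{N}=\bar{N}^\flat\bar{N}$ reduces on each complex $2\times 2$ block (using $\sigma^T=\sigma^{\#}=-\sigma$ and $\sigma^2=\mathds{1}_2$) to $\alpha_i^2-\beta_i^2=\mu_i$ and $2\alpha_i\beta_i=\nu_i$, which the stated hyperbolic parametrisation satisfies via $\sinh(2x_i)=\nu_i/|\mu_i|$; and setting $M=\bar{N}W^\flat$ and $V=NM^{-1}$ gives $V^\flat V=\mathds{1}$ with $V$ doubled-up, hence symplectic. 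So your write-up in effect supplies the routine verification that the paper omits, while resting Part~1 on the same external reference the paper itself relies on.
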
 
The lemma can be extended beyond the semisimple assumption, but since the latter holds for generic matrices \cite{cascade2}, it suffices for our purposes.

We can therefore use  Lemma \ref{peterf} together with Eq. \eqref{solute} in order to write the ``physical'' 
$T$ as $T=V\bar{T}W^\flat$, where $W$ and $\bar{T}$ can be computed as in the lemma above, and $V$ is a symplectic matrix. However, since the QLS equivalence classes are characterized by symplectic transformation, this means that 
$T_0=\bar{T}W^\flat$ transforms 
$(A_0, B_0, C_0)$ to the matrices of a quantum systems satisfying the realizability conditions. Finally, we can solve to find the set of physical parameters $(\Omega, C)$, which are given in terms of $(A_0, B_0, C_0)$, as
\begin{align*}
C&= C_0W\bar{T}^{-1},  \\
\Omega&=i\left(\bar{T}W^\flat A_0W\bar{T}^{-1} +\frac{1}{2}\left(\bar{T}^{\flat}\right)^{-1}W^\flat C_0^\flat C_0W\bar{T}^{-1}\right) .
\end{align*}

\begin{remark}
Note that, by assumption, $\Xi(s)$ is the  transfer function of a QLS. 
Since the original triple $(A_0, B_0, C_0)$ is minimal, this implies that there exists a nonsingular $T$ satisfying \eqref{solute}, so the right side of \eqref{solute} is nonsingular, which eventually leads to a nonsingular transformation 
$T$ computed using Lemma  \ref{peterf}.
\end{remark}


\begin{remark}
The proof also holds for multiple-input-multiple-output (MIMO) systems provided that  one can find a minimal doubled-up (non-physical) realization beforehand. 
\end{remark}

\subsection{Cascade realization of QLS}\label{seriesp}

Recently, a synthesis result has been established showing that the transfer function of a ``generic''  QLS has a pure cascade realization \cite{cascade}. Translated to our setting, this means that given a $n$-mode  QLS $(C,\Omega)$, one can construct an equivalent system (i.e., with the same transfer function) which is a series product of single mode systems. The result holds for a large class of systems characterized by the fact that the matrix $A$ admits a certain symplectic Schur decomposition, which holds for a dense, open subset of the relevant set of matrices. 


Assuming that such a cascade is possible, the transfer function is an $n$-mode product of single mode transfer functions, which are given by 
$$
\Xi_i (s)=\left(\begin{smallmatrix}\Xi_{i-}(s)&\Xi_{i+}(s)\\ &\\{\Xi_{i+}(s^{\#})}^{\#}& {\Xi_{i-}(s^{\#})}^{\#}\end{smallmatrix}\right).
$$
Further, we can stipulate that the coupling to the field is of the form $C=\Delta(C_-, 0)$, with each element of $C_-$ being real and positive. Indeed, since the system is assumed to be stable,  there exists a local symplectic transformation on each mode so that coupling is purely passive.  The point of this requirement is that it fixes all the parameters, so that under these restrictions each equivalence class from Sec. \ref{identifiability} contains exactly one element. Note that the Hamiltonian may still have both active and passive parts.  
Therefore, each one mode system in the series product is characterized by three parameters, $c_i, \Omega_{i-}\in\mathbb{R}$ with 
$c_i\neq 0$, and  $\Omega_{i+}\in\mathbb{C}$. If $\Omega_{i+}=0$ then the mode is passive. Actually, it is more convenient for us here to reparametrize the coefficients so that 
\[   \Xi_{i-}(s)= \frac{s^2-x^2_i-y^2_i +  2{ix_i\theta_i}   }{\left(s+x_i+y_i \right) \left(s+x_i-y_i\right)    },  \]
\[\Xi_{i+}(s)=\frac{-2ix_ie^{i\phi_i}\sqrt{y_i^2+\theta_i^2}}{\left(s+x_i+y_i \right) \left(s+x_i-y_i\right)    },\]
where $x_i=\frac{1}{2}c_i^2$,     $y_i= \sqrt{|\Omega_{i+}|^2-\Omega_{i-}^2}$, $\theta_i=\Omega_{i-}$ and $\phi_i=\mathrm{arg}(\Omega_{i+})$.
Therefore, from the properties of the individual $\Xi_{i\pm}(s)$, one finds that $\Xi_{-}(s)$ and $\Xi_{+}(s)$ can be written as
\begin{eqnarray}
\label{form1}
\Xi_{-}(s)&=&\prod\limits_{i=1}^n\frac{\left(s-\lambda_i\right)\left(s+\lambda_i\right)}{\left(s+x_i+y_i\right)\left(s+x_i-y_i\right)}
\\
\label{form2}
\Xi_{+}(s)&=&\gamma\frac{  \prod\limits_{i=1}^{j}     \left(s-\gamma_i\right)\left(s+\gamma_i\right)}{ \prod\limits_{i=1}^n \left(s+x_i+y_i\right)\left(s+x_i-y_i\right)},
\end{eqnarray}
with $\gamma, \gamma_i, \lambda_i\in\mathbb{C}$, $x_i\in\mathbb{R}$, and $y_i$ either real or imaginary, while $j$ is some number between $1$ and $n-1$. In particular, the poles are either in real pairs  or in complex-conjugate pairs.

Furthermore, there is a possibility that some of the poles and zeros may cancel in \eqref{form1} and \eqref{form2}, and as a result some of these poles and zeros could be fictitious (see proof of Theorem \ref{mainresult} later where this becomes important).

For passive systems such a cascade realization is always possible \cite{Indep, petersen} and  each single mode system is passive. We show how this may be done in the following example.

\begin{exmp}\label{sisoexample}
Consider a  SISO PQLS $(C, \Omega)$ and let  $z_1, z_2, \dots ,z_m$ be the eigenvalues of $A=-i\Omega-\frac{1}{2}C^{\dag}C$. Then the transfer function is given by
\begin{align*}\Xi(s)&=\frac{\mathrm{Det}(s-{A}^{\#})}{\mathrm{Det}(s-{A})}\\&=\frac{s-{z}^{\#}_1   }{s-z_1}\times \frac{s-{z}^{\#}_2   }{s-z_2}\times ...\times \frac{s-{z}^{\#}_1   }{s-z_1}.\end{align*}
Now, comparing each term in the product with the transfer function of a SISO system of one mode, i.e., 
\[\Xi(s)=\frac{s+i\Omega-\frac{1}{2}|c|^2}{s+i\Omega+\frac{1}{2}|c|^2},\]
it is clear that each represents the transfer function of a bona-fide PQLS with Hamiltonian and coupling parameters given  by $\Omega_i=-\mathrm{Im}(z_i)$ and $1/2|c_i|^2=-\mathrm{Re}(z_i)$. This realization of the transfer function is   a cascade of optical cavities.
Furthermore, we note that the order of the elements in the series product is irrelevant; in fact a differing order can be achieved by a change of basis on the system space (see Sec. \ref{identifiability}). 
\end{exmp}

In actual fact this result enables us to  find a system realization directly from the transfer function, thus offering a parallel strategy to the realization method in Sec. \ref{id:method} for passive systems. Note that a similar brute-force approach for finding a cascade realization of a general SISO system is also possible.  However, the active case is more involved than the passive case, as the transfer function is characterized by two quantities, $ \Xi_-(s)$ and $\Xi_+(s)$, rather than just one. For this reason and also that Sec. \ref{id:method} indeed already offers a viable realization anyway, we do not discuss the result here.

\section{Power spectrum system identification}
\label{powerspectrum}
Until now we addressed the system identification problem from a \emph{time-dependent} input perspective. 
We are now going to change viewpoint and consider a setting where the input fields are \emph{stationary} (quantum noise) but may have a non-trivial covariance matrix (squeezing). In this case the characterization of the equivalence classes boils down to finding which systems have the same \textit{power spectrum}, a problem which is well understood in the classical setting \cite{anders} but has not been addressed in the quantum domain.

The input state is ``squeezed quantum noise'', i.e., a zero-mean, pure Gaussian state with time-independent increments, which is completely characterized by its covariance matrix ${V}={V}(M,N)$ cf.  Eq.  \eqref{Ito}. In the frequency domain the state can be seen as a continuous tensor product over frequency modes of squeezed states  with covariance ${V}(M,N)$. Since we deal with a linear system, the input-output map consists of applying a (frequency dependent) unitary Bogoliubov transformation whose linear symplectic action on the frequency modes is given by the transfer function 
$$
\breve{\bf b}^{out} (-i\omega)  =  
\Xi (-i\omega) \breve{\bf b} (-i\omega).
$$
Consequently, the output state is a Gaussian state consisting of independent frequency modes with covariance matrix
\begin{eqnarray*} 
\left< \breve{\bf b}^{out} (-i\omega) \breve{\bf b}^{out} (-i\omega^\prime)^\dagger \right>=   
 \Psi_{V}(-i\omega) \delta(\omega-\omega^\prime),  
 \end{eqnarray*}
 where $ \Psi_{V}(-i\omega)$ is the restriction to the imaginary axis of the \emph{power spectral density} (or power spectrum) defined in the Laplace domain by
\begin{equation}\label{powers}
\Psi_{V}(s)= \Xi(s){V}\Xi(-s^{\#})^{\dag}.
\end{equation}

Our goal is to find which system parameters are identifiable in the stationary regime where the quantum input has a given covariance matrix ${V}$. Since in this case the output is uniquely defined by its power spectrum 
$\Psi_{V}(s)$ this reduces to identifying the equivalence class of systems with a given power spectrum. 
Moreover, since the power spectrum depends on the system parameters via the transfer function, it is clear that one can identify ``at most as much as'' in the time-dependent setting discussed in Sec. \ref{identifiability}. In other words  
the corresponding equivalence classes are at least as large as those described by symplectic transformations \eqref{eq.equivalene.classes}. 
%

In the analogous classical problem, the power spectrum can also be computed from the output correlations. 
The spectral factorization problem \cite{Youla} is tasked with finding a transfer function  from the power spectrum. There are known algorithms \cite{Youla, Davies} to do this. From the latter,  one then finds a system realization (i.e. matrices governing the system dynamics) for the given transfer function \cite{Ljung}. The problem is that the map from power spectrum to transfer functions is  non-unique, and each factorization could lead to system realizations of differing dimension. For this reason, the concept of \textit{global minimality} was introduced in \cite{kalman} to select the transfer function with smallest system dimension. This raises the following question: Is global minimality sufficient to uniquely identify the transfer function from the power spectrum ? 
The answer is in general negative \footnote{However, under the assumption that the transfer function be \textit{outer} the construction of the transfer function from the power spectrum is unique (see \cite{nerve}).}  , as discussed in \cite{anders, glover} (see also Lemma 2 and Corollary 1 in \cite{nerve} for a nice review). Our aim is to address these questions in the quantum case. In the following section we define an analogous notion of global minimality, and characterize globally minimal systems in terms of their stationary state. Afterwards we show that for SISO systems which admit a cascade realization the power spectrum and transfer function identification problems are equivalent. 
\subsection{Global minimality}\label{GHM}

As discussed earlier, in the time-dependent setting it is meaningful to restrict the attention to minimal systems, as they provide the lowest dimensional realizations which are consistent with a given input-output behavior. In the stationary setting however, it may happen that a minimal system can have the same power spectrum as a lower dimensional system. For instance if the input is the vacuum, and the system is passive then the stationary output is also vacuum and the power spectrum is trivial, i.e., the same as that of a zero-dimensional system. We therefore need to introduce a more restrictive minimality concept, as the stationary regime (power spectrum) 
counterpart of time-dependent (transfer function) minimality. The results of this section are valid for general MIMO systems and do not assume the existence of a cascade realization.

\begin{defn}
A system $\mathcal{G}=\left(S, C, \Omega\right)$ is said to be \textit{globally minimal} for input covariance ${V}$ if there exists no lower dimensional system with the same power spectrum $\Psi_{V}$. We call $\left(\mathcal{G}, {V}\right)$ a \textit{globally minimal pair}.
\end{defn}
%

Before stating the main result of this section we briefly review some symplectic diagonalization results which will be used in the proof. 
Consider a $k$-modes cv system with canonical coordinates $\breve{\bf c}$ and  a zero-mean Gaussian state with 
covariance matrix ${V} := \left< \breve{\bf c} \breve{\bf c}^\dagger \right>$. Any change of canonical coordinates which preserves the commutation relations is of the form $ \breve{\bf c} \mapsto  \breve{\bf c}^\prime = S  \breve{\bf c}$ where $S$ is a symplectic transformation $S$, cf. Definition \ref{def.symplectic}. In the basis  $ \breve{\bf c}^\prime$, the state has covariance matrix 
${V}^\prime = S{V}S^\dagger$. In particular there exists a symplectic transformation such that the modes ${\bf c}^\prime$ are independent of each other, and each of them is in a vacuum or a thermal state i.e. 
${V}^\prime_i := \left< \breve{c}_i^\prime \breve{c}_i^{\prime \dagger} \right>= \left(\begin{smallmatrix} n_i+1&0\\ 0&n_i\end{smallmatrix}\right)$ where $n_i$ is the mean photon number. We call $\breve{\bf c}^\prime$ a canonical basis, and the elements of the ordered sequence $n_1 \leq  \dots \leq n _k$ the \emph{symplectic eigenvalues} of ${V}$. The latter give information about the state's purity: if all $n_i=0$ the state is pure, if all $n_i>0$ the state is fully mixed. 
More generally, we can separate the pure and mixed modes and write ${\bf  c}^\prime= ({\bf c}_p^T, {\bf c}_m^T)^T$.

This procedure can be applied to the $m$ input modes ${\bf b}$, with covariance ${V}(N,M)$. Since the input is assumed to be pure, we have $S_{\mathrm{in}} {V}(N,M)S_{\mathrm{in}}^\dagger= {V}_{\mathrm{vac}}$  where $S_{\mathrm{in}}$ is a symplectic transformation and ${V}_{\mathrm{vac}}$ is the vacuum covariance matrix. The interpretation is that any pure squeezed state looks like the vacuum when an appropriate symplectic ``change of basis'' is performed on the original modes.  

Similarly, we can apply the above procedure to the stationary state of the system. Its covariance matrix $P$ is the solution of the Lyapunov equation 
 \begin{equation}\label{eq.Lyapunov}
 {A}P+P{A}^{\dag}+{C}^{\flat}{V}({C^\flat})^{\dag}=0
 \end{equation}
By  an appropriate symplectic transformation we can change to a canonical basis 
$\breve{\bf a}^\prime = S_{\mathrm{sys}} \breve{\bf a}$ such that  ${\bf a}^{\prime T} =({\bf a}_p^T , {\bf a}_m^T)$. The system matrices are now $A^\prime = S_{\mathrm{sys}}A S_{\mathrm{sys}}^\flat, C^\prime = CS_{\mathrm{sys}}^\flat$. Note that this transformation is of the  form prescribed by Theorem \ref{symplecticequivalence}, but the interpretation here is that we are dealing with the same system seen in a different basis, rather than a different system with the same transfer function.

By combining the two symplectic transformations we see that any linear system with pure input can be alternatively described as a system with vacuum input and a canonical basis of creation and annihilation operators.

%
%

The following theorem links global minimality with the purity of the stationary state of the system. 
 \begin{thm}\label{equivalence} 
 Let $\mathcal{G}:= \left(S, C,\Omega\right)$ be a QLS with \textit{pure} squeezed input of covariance 
 ${V} = {V}(M,N)$.

(1) The system  is globally minimal  if and only if the (Gaussian) stationary state with covariance $P$ satisfying the Lyapunov equation \eqref{eq.Lyapunov} is fully mixed.

(2) A non-globally minimal system is the series product of its restriction to the pure component and the mixed component. 

(3) The reduction to the mixed component is globally minimal and has the same power spectrum as the original system. 
\end{thm}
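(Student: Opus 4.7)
The plan is to exploit the symplectic freedoms reviewed just before the theorem. By a symplectic transformation on the input modes we may assume the input covariance is $V_{\mathrm{vac}}$, and by a symplectic change of system modes---which via Theorem~\ref{symplecticequivalence} yields an equivalent QLS with the same transfer function and hence the same power spectrum---we may further assume that the stationary state factorises into a vacuum state on a $p$-mode ``pure'' subsystem and a strictly thermal state on the remaining ``mixed'' subsystem. Both reductions preserve the power spectrum and leave the notion of global minimality unchanged, so it suffices to establish the theorem in this canonical basis.

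The technical core is a block analysis of the Lyapunov equation \eqref{eq.Lyapunov} in the pure/mixed partition of the system modes, combined with the positivity constraint $W := C^\flat V_{\mathrm{vac}}(C^\flat)^\dagger \geq 0$. The two facts I want to extract from it are: (i) the off-diagonal $A$-block driving the pure modes from the mixed ones vanishes, so the pure subsystem evolves autonomously; and (ii) consistency of its vacuum stationary state with the vacuum input forces the pure modes to couple to the bath only through an annihilation-preserving (passive) block. Together these give the cascade decomposition of part~2: the full system is input-output equivalent to the series product of the pure passive subsystem followed by the mixed one. Part~3 is then immediate, since a passive subsystem driven by vacuum input with vacuum stationary state emits vacuum at its output; the mixed subsystem therefore sees effective vacuum input and alone reproduces the full power spectrum.

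The forward direction of part~1 (globally minimal $\Rightarrow$ fully mixed) now follows by contrapositive: if the stationary state has any nontrivial pure component, the cascade reduction of parts~2--3 exhibits a strictly smaller realisation of the same power spectrum, contradicting global minimality.

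The reverse implication (fully mixed $\Rightarrow$ globally minimal) is where I expect the main obstacle, since it demands a dimensional lower bound for \emph{every} realisation of the given power spectrum, not merely a reduction of the one at hand. My plan is argument by contradiction: if a system $\tilde{\mathcal{G}}$ of strictly smaller dimension shares the power spectrum, then applying parts~2--3 to $\tilde{\mathcal{G}}$ produces a globally minimal, hence fully mixed, realisation of still smaller dimension than $\mathcal{G}$. To close the argument one needs a power-spectrum invariant pinning down the dimension of any fully mixed realisation; the natural candidate is the McMillan degree of a minimal spectral factor of $\Psi_V$, or equivalently the rank of a block-Hankel matrix built from the stationary output autocorrelation $\langle \breve{\mathbf{b}}^{\mathrm{out}}(t)\breve{\mathbf{b}}^{\mathrm{out}}(s)^\dagger\rangle$. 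The sub-claim to verify is that, for fully mixed realisations driven by pure Gaussian input, this invariant coincides with the quantum system dimension, which would force $\dim(\mathcal{G}) = \dim(\tilde{\mathcal{G}})$ and yield the contradiction.
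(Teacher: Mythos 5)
Your treatment of parts 2 and 3 and of the forward implication of part 1 is essentially the paper's: reduce to vacuum input and a canonical pure/mixed splitting of the system modes, read off from the relevant blocks of the Lyapunov equation that $C_{+}^{p}=0$, $\Omega_{+}^{pp}=0$ and $A_{\pm}^{pm}=0$, conclude that the pure component is an autonomous \emph{passive} subsystem feeding the mixed one in cascade, and observe that a passive cascade factor driven by vacuum leaves the power spectrum invariant. That much matches the paper's argument step for step.

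The gap is in the reverse implication of part 1, and it is not a small one. You reduce everything to the sub-claim that, for fully mixed realisations with pure input, the rank of the block-Hankel matrix of output autocorrelations (equivalently the McMillan degree of a minimal spectral factor of $\Psi_{V}$) coincides with the doubled-up system dimension, and you leave this unverified. That sub-claim is the entire content of the hard direction, and it cannot be settled by classical stochastic realisation theory because its classical analogue is false: a classical all-pass filter driven by white noise has a full-rank stationary state covariance (the classical counterpart of ``fully mixed'') yet its output is again white noise, whose Hankel rank is zero. So ``stationary covariance of full rank'' does not classically imply stochastic minimality, and any proof of your sub-claim must invoke the quantum structure --- purity of the joint system-plus-output state and unitarity of the dynamics --- in an essential way. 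The paper does exactly this, but with a different invariant: it shows that the symplectic rank of a sufficiently long block of stationary output equals $2d_{m}$, twice the number of mixed system modes, via an ergodicity argument combined with the Gaussian Schmidt decomposition applied to the pure tripartite state of the system and two consecutive output blocks. Since the stationary output state is determined by the power spectrum, this pins down $d_{m}$ for every fully mixed realisation and closes the argument. To rescue your route you would need to prove that your Hankel-rank invariant agrees with this symplectic-rank invariant for quantum realisations with pure input, which is again a genuinely quantum statement; as written, your proof of ``fully mixed $\Rightarrow$ globally minimal'' is incomplete.
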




\begin{proof}


Let us prove the result first in the case $S=\mathds{1}$.

First, perform a change of system and field coordinates as described above, so that the input is in the vacuum state, while the system modes decompose into its ``pure'' and ``mixed'' parts ${\bf a}^{\prime T} =({\bf a}_p^T , {\bf a}_m^T)$. Note that this transformation will alter the  coupling and Hamiltonian matrices accordingly, but  we still denote them $\Omega$ and $C$ to simplify notations. Therefore,  in this basis the stationary state of the system is given by the covariance
$$
P=\left(\begin{smallmatrix} R+\mathds{1} & 0\\0 & R\end{smallmatrix}\right), \qquad R=\left(\begin{smallmatrix} 0&0\\0& R_{m}\end{smallmatrix}\right)
$$
and satisfies the Lyapunov equation \eqref{eq.Lyapunov}.

($\implies$)  We show that if the system has a pure component, then it is globally reducible. Let us write  $A_{\pm}$ and $C_{\pm}$ as block matrices according to the pure-mixed splitting 

\[
A_{\pm}=\left(\begin{matrix} A_{\pm}^{pp}&A_{\pm}^{pm}\\A_{\pm}^{mp}&A_{\pm}^{mm}\end{matrix}\right), \qquad
C_{\pm}=\left(C_{\pm}^p, C_{\pm}^m\right),
 \]
so that the Lyapunov equation \eqref{eq.Lyapunov} can be seen as a system of 16 block matrix equations.  
Taking the (1,1) and (1,3) blocks, which correspond to the $\left<\mathbf{a}_p\mathbf{a}_p^{\dag}\right>$ and $\left<\mathbf{a}_p\mathbf{a}_p\right>$ components of the stationary state, one obtains
\begin{align}
&A_{-}^{pp}+A_{-}^{pp \dag}+C_{-}^{p\dag}C_{-}^p=0\label{block1}\\
&A_{+}^{pp T}-C_{-}^{p \dag}C_+^p=0\label{block3}.
\end{align}
Since $A_{-}^{pp}=-i\Omega_{-}^{pp}  - 1/2(C_{-}^{p \dag} C_{-}^p - C_{+}^{p T} C_{+}^{p\#})$, Eq. (\ref{block1}) implies that $C_{+}^{pT} C_{+}^{p \#}=0$, hence $C_{+}^{p}=0$. Therefore, using this fact in Eq. (\ref{block3}) gives $A_+^{pp}=0$, hence $\Omega_{+}^{pp}=0$. 
These two tell us that the pure part contains only passive terms.

Consider now the $(1, 2)$ and $(2, 3)$ blocks, which correspond to the $\left<\mathbf{a}_p\mathbf{a}_m^{\dag}\right>$ and $\left<\mathbf{a}_m\mathbf{a}_p\right>$ components of the stationary state. From this,  we get 
\begin{align}
&A_{-}^{pm}(R_m+\mathds{1})+
A_{-}^{pm \dag}+C_{-}^{p \dag}C_{-}^m=0\label{block4}\\
&(R_m+1)A_+^{pm T}
=0\label{block5}.
\end{align}
Since $A_{-}^{pm}+A_{-}^{pm \dag}+C_{-}^{p \dag}C_{-}^m=0$, and $R_m$ is invertible, 
Eq. (\ref{block4}) implies $A_{-}^{pm}=0$. Similarly, Eq. (\ref{block5}) implies that $A_{+}^{pm}=0$.

Let $\mathcal{G}^p:= (\mathds{1}, \Omega^{pp}, C^p)$ be the system consisting of the pure modes, with 
$\Omega^{pp}= \Delta(\Omega^{pp}_{-}, 0)$ and 
$C^p = \Delta (C_-^p, 0)$. Let $\mathcal{G}^m:= (\mathds{1}, \Omega^{mm}, C^m) $ be the system consisting of the mixed modes with 
$\Omega^{mm}= \Delta(\Omega^{mm}_{-}, \Omega^{mm}_{+})$ and $C^m = \Delta (C_-^m, C_+^m)$. We can now show that the original system is the series product (concatenation) of the pure and mixed restrictions
$$
\mathcal{G} = \mathcal{G}^m \triangleleft \mathcal{G}^p.
$$
Indeed, using the fact that $C_{+}^p= \Omega_{+}^{pp}= A_{-}^{pm} = A_{+}^{pm}=0$, one can check that the series product has required matrices  \cite{g2}
$$
C_{series} = \tilde{C}^p + \tilde{C}^m = C
$$
and 
$$
\Omega_{series} = \tilde{\Omega}^{pp}+ \tilde{\Omega}^{mm} + \mathrm{Im}_\flat (\tilde{C}_m^\flat \tilde{C}_p)
$$
where the  ``tilde'' notation stands for block matrices where only one block is nonzero, e.g., $\tilde{C}^p= (C^p,  0)$, and 
$ \mathrm{Im}_\flat X:= (X- X^\flat)/2i $.

Now, let $\Xi^{p,m}(s)$ denote the transfer functions of  $\mathcal{G}^{p,m}$; since the transfer function of a series product is the product of the transfer functions, we have $\Xi(s) = \Xi^{m}(s)\cdot \Xi^{p}(s)$.  Furthermore, since  
$\mathcal{G}^{p}$ is passive and the input is vacuum, we have $\Psi^p_{ V}(s)   =  \Xi^p(s) {V} \Xi^p(-s^{\#})^\dagger = {V} $ so that 
$$
\Psi_{ V}(s) = \Xi(s) {V} \Xi(-s^{\#})^\dagger =  \Xi_m(s) {V} \Xi_m(-s^{\#})^\dagger 
$$
which means that the original system was globally reducible (not minimal).

($\impliedby$)
We now show that if the system's stationary state is fully mixed, then it is globally minimal. The key idea is that a sufficiently long block of output has a finite symplectic rank (number of modes in a mixed state in the canonical decomposition) equal to twice the dimension of the system. Therefore the dimension of a globally minimal system is ``encoded'' in the output. This is the linear dynamics analog of the fact that stationary outputs of finite dimensional systems (or translation invariant finitely correlated states) have rank equal to the square of the system dimension (or bond dimension) \cite{GUTA4}. To understand this property consider the system (S) together with the output at a long time 
$2T$, and split the output into two blocks: $A$ corresponding to an initial time interval $[0,T]$ and  $B$ corresponding to 
$[T,2T]$. If the system starts in a pure Gaussian state, then the $S+A+B$ state is also pure. By ergodicity, at time $T$ the system's state is close to the stationary state with symplectic rank $d_m$. At this point the system and output block $A$ are in a pure state so by appealing to the ``Gaussian Schmidt decomposition'' \cite{WOLF} we find that the state of the block $A$ has the same symplectic eigenvalues (and rank $d_m$) as that of the system. In the interval $[T,2T]$ the output $A$ is only shifted without changing its state, but the correlations between $A$ and $S$ decay. Therefore the joint $S+A$ state is close to a product state and has symplectic rank $2 d_m$. On the other hand we can apply the Schmidt decomposition argument to the pure bipartite system consisting of $S+A$ and $B$ to find that the symplectic rank of $B$ is $2d_m$. By ergodicity, $B$ is close to the stationary state in the limit of large times, which proves the assertion.

To extend the result to $S\neq\mathds{1}$, 
instead perform the change of field co-ordinates $V\mapsto S_{\mathrm{in}}S^bV\left(S_{\mathrm{in}}S^b\right)^{\dag}$ at the beginning. The proof for this case then follows as above because in this basis $S=\mathds{1}$.
\end{proof}
 
%


This result enables one to check global minimality by computing the symplectic eigenvalues of the stationary state. 
If all eigenvalues are nonzero, then the state is fully mixed and the system is globally minimal.  We emphasize that the argument relies crucially on the fact that the input is a pure state. For mixed input states and in particular classical inputs, the stationary state may be fully mixed while the system is non globally minimal.

The next step is to find out which parameters of a globally minimal system can be identified from the power spectrum.

\section{Comparison of power spectrum and transfer function identifiability}\label{main1}

\subsection{Power spectrum identifiability result}

The main result of this section is the following theorem which shows that two globally minimal SISO systems have the same power-spectrum if and only if they have the same transfer function, and in particular are related by a symplectic transformation as described in Theorem \ref{symplecticequivalence}.

\begin{thm}\label{mainresult}
Let $\left(C_1, \Omega_1\right)$ and $\left(C_2, \Omega_2\right)$ be two globally minimal SISO systems for fixed pure 
input with covariance ${V}(N,M)$, which are assumed to be generic in the sense of \cite{cascade}. Then
$$
\Psi_{1}(s)=\Psi_{2}(s) \,\, \mathrm{for}~\mathrm{all}~s \quad \Leftrightarrow \quad \Xi_1(s)=\Xi_2(s)\,\, \mathrm{for}~\mathrm{all}~s 
$$
\end{thm}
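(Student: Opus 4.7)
One direction is immediate: $\Xi_1=\Xi_2$ gives $\Psi_1=\Psi_2$ via $\Psi_V(s)=\Xi(s) V \Xi(-s^\#)^\dagger$. For the converse, my plan is organized in three steps: (i) absorb the purifying symplectic of $V$ into $\Xi$ to reduce to vacuum input; (ii) identify the common denominator $Q$ of the entries of $\Xi_\pm$ from the pole pattern of $\Psi$; (iii) recover the even-polynomial numerators of the reduced transfer function from the $2\times 2$ polynomial numerator matrix of $\Psi$ via a root-partitioning argument that exploits the cascade form together with genericity.

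For step (i), purity of $V$ provides a symplectic $S_{\mathrm{in}}$ with $V=S_{\mathrm{in}} V_{\mathrm{vac}} S_{\mathrm{in}}^\dagger$ and $V_{\mathrm{vac}}=\mathrm{diag}(1,0)$. Setting $\tilde\Xi_k:=\Xi_k S_{\mathrm{in}}$ gives $\Psi_V=\tilde\Xi_k V_{\mathrm{vac}}\tilde\Xi_k(-s^\#)^\dagger$, so the hypothesis reduces to equality of the vacuum power spectra of $\tilde\Xi_1$ and $\tilde\Xi_2$. A direct computation with $V_{\mathrm{vac}}$ yields the three independent entries
\begin{equation*}
\tilde\Psi_{11}=\tilde\Xi_-(s)\tilde\Xi_-(-s^\#)^\#,\;\tilde\Psi_{12}=\tilde\Xi_-(s)\tilde\Xi_+(-s),\;\tilde\Psi_{22}=\tilde\Xi_+(s^\#)^\#\tilde\Xi_+(-s).
\end{equation*}
For step (ii), the cascade form \eqref{form1}--\eqref{form2} shows that $\Xi_\pm$ share the denominator $Q(s)=\prod_i(s+x_i+y_i)(s+x_i-y_i)$, which has real coefficients since each $y_i$ is real or purely imaginary. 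Post-composition by $S_{\mathrm{in}}$ preserves denominators, so $\tilde\Xi_\pm=\tilde P_\pm/Q$ with $\tilde P_\pm$ even polynomials (linear combinations of the original $P_\pm$). The denominator of $\tilde\Psi$ is then $Q(s)Q(-s)$ and its open-LHP roots, by Hurwitz stability, uniquely identify $Q$.

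For step (iii), let $N(s):=\tilde\Psi(s)Q(s)Q(-s)$. Using evenness $\tilde P_\pm(-s)=\tilde P_\pm(s)$ and realness of $Q$, one gets $N_{11}=\tilde P_-\tilde P_-^\#$, $N_{12}=\tilde P_-\tilde P_+$, $N_{22}=\tilde P_+\tilde P_+^\#$, where $\tilde P_\pm^\#$ denotes the coefficient-conjugated polynomial. Viewing $\tilde P_\pm$ as polynomials in the variable $s^2$ with zero sets $\{\lambda_i^2\}$ and $\{\mu_j^2\}$, the zero sets of $N_{11}$, $N_{22}$, $N_{12}$ in $s^2$ become respectively $\{\lambda_i^2\}\cup\{\lambda_i^{\#2}\}$, $\{\mu_j^2\}\cup\{\mu_j^{\#2}\}$, and $\{\lambda_i^2\}\cup\{\mu_j^2\}$. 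Under genericity these are pairwise disjoint where expected, giving $\{\mu_j^2\}=\mathrm{roots}(N_{12})\cap\mathrm{roots}(N_{22})$ and $\{\lambda_i^2\}=\mathrm{roots}(N_{12})\setminus\{\mu_j^2\}$. Combined with leading coefficients, this uniquely reconstructs $\tilde P_\pm$, hence $\tilde\Xi$ and $\Xi$.

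The hard part will be step (iii): verifying that the ``generic in the sense of \cite{cascade}'' hypothesis really supplies the disjointness conditions that make the root-partitioning unambiguous, and invoking Theorem \ref{equivalence} to rule out $\tilde P_+\equiv 0$---in that degenerate case the reduced system would be passive with vacuum stationary state, contradicting fully mixedness, and the $N_{12}$-based disambiguation would collapse, leaving only the $2^n$-fold sign ambiguity inherent in a factorization of $N_{11}$ alone.
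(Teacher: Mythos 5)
Your skeleton matches the paper's: reduce to vacuum input via the purifying symplectic, observe that the power spectrum is determined by the three products $\Xi_-(s)\Xi_-(-s^{\#})^{\#}$, $\Xi_-(s)\Xi_+(-s)$, $\Xi_+(s^{\#})^{\#}\Xi_+(-s)$, identify the poles first, then the numerators. The factorisation $N_{11}=\tilde P_-\tilde P_-^{\#}$, $N_{12}=\tilde P_-\tilde P_+$, $N_{22}=\tilde P_+\tilde P_+^{\#}$ is a clean repackaging of the paper's equations \eqref{blue}--\eqref{yellow}. However, there is a genuine gap at the step you yourself flag as ``the hard part'': you lean on genericity to supply the disjointness of the root sets $\{\lambda_i^2\}$, $\{\mu_j^2\}$, $\{\lambda_i^{\#2}\}$, $\{\mu_j^{\#2}\}$, but genericity in the sense of \cite{cascade} only guarantees that $A$ admits the symplectic Schur form needed for a cascade realisation; it says nothing about arithmetic coincidences among the zeros of $\tilde P_-$ and $\tilde P_+$, and such coincidences can and do occur for globally minimal systems. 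The paper's proof does not assume disjointness. It shows that the \emph{dangerous} coincidences --- those that would make a zero assignment genuinely ambiguous or hide a pole --- force $\Xi_-$ and $\Xi_+$ to contain the factors \eqref{topple1} and \eqref{tipple1}, which splits off a passive cascade factor invisible in the power spectrum and so contradicts \emph{global minimality} (not genericity). The remaining, harmless coincidences (a zero pair $\{v,-v^{\#}\}$ common to both off-diagonal and diagonal entries; real zeros of various multiplicities) are resolved by an explicit case analysis and a multiplicity-counting argument (the $n,m,p,q$ bookkeeping for real zeros), both of which rely crucially on the symplectic identity $|\Xi_-(-i\omega)|^2-|\Xi_+(-i\omega)|^2=1$ and on first padding $\Xi_\pm$ with fictitious zeros so that they share a common denominator. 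Without substitutes for these arguments your root-partitioning is not well defined.

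A second, smaller gap sits in your step (ii): you assert that the reduced denominator of $\tilde\Psi$ is exactly $Q(s)Q(-s)$, so that its left-half-plane roots recover $Q$. This presumes no pole--zero cancellation between numerators and denominators in the three products, which is precisely what can fail (a pole $\lambda$ of $\Xi_-$ can be cancelled by a zero of $\tilde P_-^{\#}$ or of $\tilde P_+$). The paper devotes a separate argument to showing that any such cancellation again produces the forbidden factors \eqref{topple1}--\eqref{tipple1} and hence contradicts global minimality; the real-pole case additionally needs the symplectic identity. So the correct logical engine throughout is global minimality plus unitarity of $\Xi$ on the imaginary axis, with genericity used only to justify the cascade ansatz \eqref{form1}--\eqref{form2}. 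If you rewire your step (iii) (and the cancellation issue in step (ii)) to derive the needed exclusions from global minimality in this way, your factorisation-based presentation becomes essentially the paper's proof in cleaner notation.
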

 \begin{proof}
Recall that the power spectrum of a system $\left(C, \Omega\right)$ is given by $\Xi(s)V\Xi(-s^{\#})^{\dag}$. 
Therefore, if $\Xi_1(s)=\Xi_2(s)$ then $\Psi_{1}(s)=\Psi_{2}(s) $. We will now prove the converse. 

Writing $V$ as $S_0\left(\begin{smallmatrix}1&0\\0&0\end{smallmatrix}\right)S_0^{\dag}$ for some symplectic matrix $S_0$,  we express the power spectrum as   $S_0\tilde{\Xi_i}(s)V_{\mathrm{vac}}\tilde{\Xi_i}(-s^{\#})^{\dag}S_0^{\dag}$, where $\tilde{\Xi}(s)$ is the transfer function of the system $\left(1,S_0^\flat C, \Omega\right)$ and $V_{\mathrm{vac}}$ is the vacuum input. 
As $S_0$ is assumed to be known, the original problem reduces to proving the same statement for systems with vacuum input. In this case the power spectrum is given by 
\begin{equation}\label{PS1}
\left(\begin{array}{cc} 
\Xi_{-}(s)  {\Xi_{-}(-s^{\#})}^{\#}      &\Xi_{-}(s)\Xi_{+}(-s)\\[2mm]
{\Xi_{+}(s^{\#})^{\#}\Xi_{-}(-s^{\#})^{\#}}&{\Xi_{+}(s^{\#})}^{\#} \Xi_+(-s)    \end{array}\right).
\end{equation}

The transfer function is completely characterized by the elements in the top row of its matrix, i.e., $\Xi_{-}(s) $ and     $\Xi_{+}(s)$. Also, $\Xi_{-}(s) $ and     $\Xi_{+}(s)$ must be of the the form \eqref{form1} and \eqref{form2}.  Our first observation is that  $\Xi_{-}(s) $ and  $\Xi_{+}(s)$ in \eqref{form1} and \eqref{form2} cannot contain poles and zeros in the following arrangement: $\Xi_{-}(s) $ has a factor like
\begin{equation}\label{topple1}
\frac{(s-{\lambda}^{\#}_i)(s+{\lambda}^{\#}_i)}{(s-{\lambda}^{\#}_i)(s-\lambda_i)}=\frac{(s+{\lambda}^{\#}_i)}{(s-\lambda_i)}
\end{equation}
 and $\Xi_{+}(s)$ contains a factor like 
\begin{equation}\label{tipple1}
\frac{(s-\lambda_i)(s+\lambda_i)}{(s-{\lambda}^{\#}_i)(s-\lambda_i)}=\frac{(s+\lambda_i)}{(s-{\lambda}^{\#}_i)}.
\end{equation}
For if this were the case and assuming that this could be done $k$ times, then our original system could be decomposed as a cascade (series product) of two systems. 
\begin{itemize}
\item[(i)]
The first system is a $k$-mode passive system with transfer function 
\begin{equation}\label{huck1}
\Xi^{(1)}(s)=\left(\begin{smallmatrix} \Xi_-^{(1)}(s)&0\\0&\Xi_-^{(1)}(s^{\#})^{\#}\end{smallmatrix}\right), 
\end{equation}
where 
$$
\Xi_-^{(1)}(s)=\prod^{k}_{i=1}\frac{(s+{\lambda}^{\#}_i)}{(s-\lambda_i)}, \qquad
\Xi_-^{(1)}(s^{\#})^{\#}=\prod^{k}_{i=1}\frac{(s+\lambda_i)}{(s-{\lambda}^{\#}_i)}.
$$
Note that by Example \ref{sisoexample} it is physical. 
\item[(ii)] The second system has $n-k$ modes and transfer function 
\begin{equation}\label{huck2}
\Xi^{(2)}(s)=   \left(\begin{smallmatrix} \Xi_-^{(2)}(s)&\Xi_+^{(2)}(s)\\\Xi_+^{(2)}(s^{\#})^{\#}&\Xi_-^{(2)}(s^{\#})^{\#}\end{smallmatrix}\right),\end{equation}
where 
$$ \Xi_-^{(2)}(s)=\Xi_-(s)   \prod^{k}_{i=1}\frac{\left(s+{\mu}^{\#}_i\right)}{\left(s-\mu_i\right)},$$
$$ \Xi_+^{(2)}(s)=    \Xi_+(s)   \prod^{k}_{i=1}\frac{\left(s+\mu_i\right)}{\left(s-{\mu}^{\#}_i\right)}.$$
It can be shown that there exists a minimal physical quantum system with this transfer function (see Appendix \ref{APP2}).
\end{itemize}
Since $\Xi^{(1)}(s)$ is passive, 
$$\Xi^{(1)}(s)V_{\mathrm{vac}}\Xi^{(1)}(-s^{\#})^{\dag}=V_{\mathrm{vac}}$$ and hence this $k$-mode system is not visible from the power spectrum, while the power spectrum is the same as that of the lower dimensional system $\Xi^{(2)}(s)$. 
Therefore we have a contradiction to global minimality.

We will now construct $\Xi_{-}(s) $ and     $\Xi_{+}(s)$ directly from the power spectrum. 
This is equivalent to identifying their poles and zeros \footnote{Note that some of the poles and zeros in \eqref{form1} and \eqref{form2} may be  ``fictitious'' and so will not be required to be identified.}. To do this we must identify all poles and zeros of $\Xi_-(s)$ and $\Xi_+(s)$ from the three quantities:
\begin{align}
& \label{blue}  \Xi_{-}(s)  {\Xi_{-}(-s^{\#})}^{\#},   \\
& \label{red} \Xi_{-}(s)\Xi_{+}(-s),\\
& \label{yellow} {\Xi_{+}(s^{\#})}^{\#} \Xi_+(-s).
\end{align}

First, all poles of $\Xi_-(s)$ and $\Xi_+(s)$ may be identified from the power spectrum. Indeed, due to stability, each 
pole in \eqref{blue}-\eqref{yellow} can be assigned unambiguously to either $\Xi_{-}(s) $ or $ \Xi_{+}(-s)$. However, cancellations between zeros and poles of the two terms in the product may lead to some transfer function poles not being identifiable, so we need to show that this is not possible.  Suppose that a pole $\lambda$ of  $\Xi_-(s)$ is not visible from the power spectrum. This implies  the following
\begin{itemize}
\item[(i)] from \eqref{blue}, $\lambda$ is a zero of $ {\Xi_{-}(-s^{\#})}^{\#}$  [equivalently $-{\lambda}^{\#}$ is a zero of  $ \Xi_{-}(s) $], and 
\item[(ii)] From \eqref{red}, $\lambda$ is a zero of $\Xi_{+}(-s)$ [equivalently $-\lambda$ is a zero of $\Xi_{+}(s)$].
\end{itemize}
We consider two separate cases: $\lambda$ nonreal or real. If $\lambda$ is nonreal then from 
 the symmetries of the poles and zeros in \eqref{form1} and \eqref{form2}, $\Xi_{-}(s)$ will contain a term like
\begin{equation}\label{topple}
\frac{(s-{\lambda}^{\#})(s+{\lambda}^{\#})}{(s-{\lambda}^{\#})(s-\lambda)}=\frac{(s+{\lambda}^{\#})}{(s-\lambda)}\end{equation} and
$\Xi_{+}(s) $ will contain a term like
\begin{equation}\label{tipple}
\frac{(s-\lambda)(s+\lambda)}{(s-{\lambda}^{\#})(s-\lambda)}=\frac{(s+\lambda)}{(s-{\lambda}^{\#})}.\end{equation}
By the argument above,  the system is not globally minimal  as there will be a mode of the system that is not visible in the power spectrum. Therefore all nonreal poles of $\Xi_-(s)$ may be identified. 
 A similar argument ensures that all poles of $\Xi_+(s)$ are visible in the power spectrum. 
 
 If $\lambda$ is real, we show that $\Xi_-(s)$ and $\Xi_+(s)$ will have terms of the form \eqref{topple} and \eqref{tipple} and the result will follow. Indeed since $\lambda$ is a pole of $\Xi_-(s)$, the denominator of \eqref{form1} must have  a second root at $\lambda$ since the first cancels with the term $(s-\lambda)$ which comes together with $(s+\lambda)$ in the numerator. But then, $\Xi_+(s)$ must also have a pole at $\lambda$ since otherwise 
 $|\Xi_-(-i\omega)|^2-|\Xi_+(-i\omega)|^2=1$ could not hold. A similar argument holds for a real pole of $\Sigma_+$.
 
 Therefore we conclude that all poles of $\Xi_\pm(s)$ can be identified from the power spectrum, and we focus next on the zeros.
 Unlike the case of poles, it is not clear whether a given zero in any of these plots belongs to the factor on the left or the factor on the right in each of these equation [i.e, to $\Xi_-(s)$ or $ {\Xi_{-}(-s^{\#})}^{\#}$ in \eqref{blue}, etc].

Since the poles of $\Xi_-(s)$ and $\Xi_+(s)$ may be different due to cancellations in \eqref{form1} and \eqref{form2}, it is convenient here to add in ``fictitious''  zeros into the plots \eqref{blue}-\eqref{yellow} so that 
 $\Xi_-(s)$ and $\Xi_+(s)$ have the same poles.  Note that these fictitious poles and zeros would have been present in \eqref{form1} and \eqref{form2} before simplification. From this point onwards, the zeros in \eqref{blue}-\eqref{yellow} will refer to this augmented list which includes the additional zeros.


 \underline{\textit{Real zeros}.}

In general the real zeros of $\Xi_-(s)$ and $\Xi_+(s)$ come in pairs $\pm \lambda$ [see Eqs. \eqref{form1},  \eqref{form2}],  unless a pole  and zero (or more than one)  cancel on the negative real line. Our task here is to distinguish these two cases from plots \eqref{blue}-\eqref{yellow}.  $\Xi_-(s)$ has either
  (i)  zeros at $\pm \lambda$, or
 (ii) a zero at $\lambda>0$ but not at $-\lambda$. 

In case (i) \eqref{blue} will have a double zero at each $\pm\lambda$, whereas in case (ii) \eqref{blue} will have a single zero at $\pm\lambda$.  We need to be careful here in discriminating cases (i) and (ii)  on the basis of the zeros of \eqref{blue}. For example, a double zero at $\lambda$ in \eqref{blue} could be a result of one case (i) or two case (ii) in $\Xi_-(s)$. More generally, we could have an $n$th order zero at $\lambda$ and as a result even more degeneracy is possible. A similar problem arises for the zeros of $\Xi_+(s)$ in \eqref{yellow}.


Our first observation here is that it is not possible for both $\Xi_-(s)$ and $\Xi_+(s)$ to have zeros at $\pm\lambda$ (taking $\lambda>0$ without loss of generality). If this were possible then by using the symplectic condition $|\Xi_-(-i\omega)|^2-|\Xi_+(-i\omega)|^2=1$ and the fact that we are assuming that  $\Xi_-(s)$ and $\Xi_+(s)$ have the same poles tells us that $\Xi_-(s)$ and $\Xi_+(s)$ must both  have had double poles at $-\lambda$. The upshot is that $\Xi_-(s)$ and $\Xi_+(s)$ will have terms of the form \eqref{topple1} and \eqref{tipple1}, which is a contradiction. 

Now, suppose \eqref{blue} has $n$ zeros at $\lambda>0$ and \eqref{yellow} has $m$ zeros at $\lambda>0$. Then we know that $\Xi_-(s)$ must have $\frac{n-p}{2}$ zeros at $-\lambda$ and $\frac{n+p}{2}$ zeros at $\lambda$. Also, $\Xi_+(s)$ must have $\frac{m-q}{2}$ zeros at $-\lambda$ and $\frac{m+q}{2}$ zeros at $\lambda$. The goal here is to find $p$ and $q$ because if these are known then it is clear that there must be    $\frac{n-p}{2}$ ($\frac{m-q}{2}$) type (i) zeros and $p$ ($q$) type (ii) zeros in $\Xi_-(s)$ ($\Xi_+(s)$).

By the observation above it is clear that either $p=n$ or $q=m$. Also, in \eqref{red} there will be $\frac{n+m+p-q}{2}$ zeros at $\lambda$ and $\frac{n+m+q-p}{2}$ zeros at $-\lambda$. Hence $q-p$ is known at this stage. Finally, it is fairly easy to convince ourselves that if $p=n$ but one concludes that $q=m$ (or vice versa) and using the value of $q-p$ leads to a contradiction. Hence $p$ and $q$ can be determined uniquely. For example, if $n=2$, $m=5$, $q=2$ and $p=3$ so that $q=n$ and $q-p=-1$. Then assuming wrongly that $p=5$ and using $q-p=-1$ it follows that $q=4$ and so $n$ must be 6, which is incorrect.  

 Having successfully identified all real zeros,  we now show how to identify the zeros of $\Xi_-(s)$ and $\Xi_+(s)$ away from the real axis.

\underline{\textit{Complex (nonreal) zeros}.}

Comparing the zeros of \eqref{blue} with those of \eqref{red} we find two cases in which the zeros can be assigned directly.
\begin{itemize}
\item[(i)] Case 1: Let $z$ be a zero of \eqref{blue} that is not a zero of \eqref{red}. Then $z$ must be a zero of  ${\Xi_{-}(-s^{\#})}^{\#}$. Hence $-{z}^{\#}$ is a zero of $\Xi_-(s)$.
\item[(ii)] Case 2: Let $w$ be a zero of \eqref{red} that is not a zero of \eqref{blue}. Then $w$ must be a zero of  ${\Xi_{+}(-s)}^{\#}$. Hence $-w$ is a zero of $\Xi_+(s)$.
\end{itemize}
The question now is whether this procedure enables one to identify all zeros. Suppose that there is a zero $v$ that is common to both of these plots. Then $-{v}^{\#}$ must also be a zero of \eqref{blue}. Now, if $-{v}^{\#}$ is not a zero of \eqref{red} then $v$ is identifiable  as belonging to $\Xi_-(s)$.

Therefore we can restrict our attention to the case that the zero pair $\{v,-{v}^{\#}\}$ is common to both plots. Note that in this instance the list of zeros of \eqref{yellow} will also contain $\{v,-{v}^{\#}\}$.  
Assume without loss of generality that $v$ is in the right half complex plane. 
Note that there cannot be a second zero pair $\{u,-{u}^{\#}\}$ such that $u={v}^{\#}$. If this were the case then either $\{v, -v\}$ will be zeros of $\Xi_-(s)$ and  $\{-{v}^{\#}, {v}^{\#}\}$ will  be zeros of   $\Xi_+(s)$, or  $\{u, -u\}$ will be zeros of $\Xi_-(s)$ and  $\{-{u}^{\#}, {u}^{\#}\}$ will  be zeros of   $\Xi_+(s)$. In either case by using the condition $|\Xi_-(-i\omega)|^2-|\Xi_+(-i\omega)|^2=1$ for all $\omega$ and the fact that   $\Xi_-(s)$ and $\Xi_+(s)$ have the same poles by assumption, it follows that $\Xi_-(s)$ and $\Xi_+(s)$ will have terms of the form \eqref{topple1} and \eqref{tipple1}, which contradicts global minimality. Finally, under the assumptions that the zero pair $\{v,-{v}^{\#}\}$ is common to both \eqref{red} and \eqref{blue} with no second pair at $\{u,-{u}^{\#}\}$ such that $u={v}^{\#}$, then we can conclude that $v$ must be a zero of $\Xi_-(s)$. For if this were not the case and so $-{v}^{\#}$ were a zero of $\Xi_-(s)$ then there must be another zero of $\Xi_-(s)$ at ${v}^{\#}$ (since pole-zero cancellation cannot occur in the right-half plane). Also from \eqref{red} this would require that $\Xi_+(s)$ has a zero at $-v$ (hence also $v$). Therefore we have a contradiction to the fact that there is no second pair at $\{u,-{u}^{\#}\}$ such that $u={v}^{\#}$. 


Therefore we have  successfully identified all  zeros of the transfer function away from the real axis, which completes the proof. 
\end{proof}

The theorem says that if a SISO system is globally minimal then the power spectrum is as informative as the transfer function. The result also gives a constructive method to check global minimality. Further it   enables one to  construct the transfer function of the system's globally minimal part. From this, one can then  construct a system realization of this globally minimal restriction, using the results from Sec. \ref{id:method}. We call this realization method  \textit{indirect} because one first finds a transfer function fitting the power spectrum before constructing the system realization.

\begin{Corollary} 
Let   $(C, \Omega)$ be a SISO QLS with pure  input ${V}(N,M)$. Then one can construct a globally minimal realization, $(C', \Omega')$ {indirectly} from the power  spectrum generated by the QLS $(C, \Omega)$. The realization  $(C', \Omega')$ will be  unique up to the symplectic equivalence in Theorem \ref{symplecticequivalence}.
\end{Corollary}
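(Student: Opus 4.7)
The proof will be a direct synthesis of the results already established, so the plan is essentially an algorithm combining Theorems \ref{equivalence}, \ref{mainresult}, and \ref{symplecticequivalence} with the realisation procedure of Section \ref{id:method}. First I would observe that the power spectrum $\Psi_V(s)$ of $(C,\Omega)$ coincides, by part 3 of Theorem \ref{equivalence}, with the power spectrum of its globally minimal restriction to the mixed subsystem. Hence there is no loss in reasoning as if the target system were already globally minimal; the indirect realisation we will build will reproduce $\Psi_V$ and, by the converse direction of Theorem \ref{equivalence}, be globally minimal itself.

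Next, I would extract a transfer function from $\Psi_V$. Using the symplectic factorisation $V = S_0 V_{\mathrm{vac}} S_0^\dagger$ (with $S_0$ known, since $V$ is given), one reduces to the vacuum-input case where the power spectrum takes the form \eqref{PS1}. Then I would invoke the constructive pole/zero identification in the proof of Theorem \ref{mainresult}: locate all poles of $\Xi_\pm$ from the denominators appearing in \eqref{blue}--\eqref{yellow}, identify the real zeros by comparing multiplicities between these three blocks, and finally assign the complex zeros by the case analysis distinguishing zeros common to \eqref{blue} and \eqref{red} from those appearing in only one. This produces unambiguously the pair $(\Xi_-(s), \Xi_+(s))$, hence the full doubled-up transfer function $\Xi'(s)$ of a globally minimal system with the correct power spectrum.

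From $\Xi'(s)$ I would then apply the realisation procedure of Section \ref{id:method}: start from any minimal (possibly non-physical) doubled-up triple $(A_0, B_0, C_0)$ with $\Xi'(s) = \mathds{1} + C_0(s\mathds{1}-A_0)^{-1}B_0$ (e.g.\ the one constructed in Appendix \ref{APP1}), solve the Lyapunov-type identity \eqref{solute} for $T^\flat T$, and use Lemma \ref{peterf} to extract a symplectic $T_0 = \bar{T}W^\flat$ mapping $(A_0,B_0,C_0)$ to a triple satisfying the physical realisability conditions \eqref{PR}. From that triple one reads off $(C', \Omega')$ via the explicit formulas at the end of Section \ref{id:method}.

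Finally, uniqueness up to symplectic equivalence is immediate from Theorem \ref{symplecticequivalence}: any two minimal realisations of $\Xi'(s)$ differ by a symplectic transformation of the mode operators, and since global minimality of $(C',\Omega')$ is preserved under such transformations (it depends only on $\Xi'$ and $V$, cf.\ Theorem \ref{equivalence}), the entire equivalence class obtained this way consists of globally minimal realisations generating the original $\Psi_V$. The main obstacle is really contained in Theorem \ref{mainresult} — namely, verifying that the pole/zero extraction from \eqref{blue}--\eqref{yellow} is unambiguous in the presence of possible cancellations and degeneracies — but since that step has already been handled there, the corollary reduces to stringing together the algorithm and invoking the symplectic equivalence theorem.
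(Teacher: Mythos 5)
Your proposal is correct and follows essentially the same route as the paper: the Corollary is obtained by combining the pole/zero extraction of Theorem \ref{mainresult} (to recover the transfer function of the globally minimal part from $\Psi_V$) with the realisation procedure of Section \ref{id:method}, and the uniqueness claim is exactly Theorem \ref{symplecticequivalence} applied to that transfer function. The paper treats this as an immediate consequence of the preceding discussion, and your write-up is a faithful elaboration of that same argument.
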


Note that the work here also extends a result in \cite{Naoki}. There, conditions were derived to determine when the stationary state of the linear system is pure. Here, by means of the previous theorem, we have established a test to determine if there is a \textbf{subsystem}  with a pure stationary state. 

\begin{remark}
For general input $V=S_0V_{\mathrm{vac}}S_0^{\dag}$, clearly systems of the form $\left( S_0\Delta\left(C_-,0\right), \Delta\left(\Omega_-, 0\right)\right)$ have trivial power spectrum. 
Theorem \ref{mainresult} says that these are the only such systems (up to symplectic equivalence in Theorem \ref{symplecticequivalence}). 
\end{remark}

\begin{remark}
We have assumed that the scattering or squeezing matrix, $S$, for a system is the identity in this result. In fact the scattering or squeezing  matrix is not always identifiable from the power spectrum. For example, a zero mode system with a single scattering term $S=\left(\begin{smallmatrix}e^{i\pi}&0\\0&e^{-i\pi}\end{smallmatrix}\right)$ will have trivial power spectrum.
\end{remark} 

\subsection{Power spectrum identification of passive QLSs}

In this section we consider the special case of a minimal \emph{passive} SISO QLSs. As noted before, we can therefore drop the doubled-up notation, cf. Eqs \eqref{iolp} and \eqref{eq.transfer.function}. For simplicity we will denoted $C:= C_-$, $\Omega:= \Omega_-$, and choose $S= \mathds{1}$ so that the transfer function is 
$$ 
\Xi(s)=1 -C (s\mathds{1}_n-A )^{-1}C^{\dag} = 
\frac{\mathrm{det}\left(s\mathds{1}_n + {A}^{\#} \right)}{\mathrm{det}\left(s\mathds{1}_n-A \right)}
$$
where $A= -i\Omega -\frac{1}{2}C^\dagger C$ and its spectrum is $\sigma(A) :=\{ \lambda_1,\dots ,\lambda_n\}$. 
The transfer function is a \textit{monic} rational function in $s$, with poles $p_i= \lambda_i$ in the left half plane, and zeros  
$z_i= -{p_i}^{\#}= -{\lambda^{\#}_i}$ in the right half plane.


If the input state is vacuum then the power spectrum is trivial ($\Phi_{V}=V$) and the only globally minimal systems are the trivial ones (zero internal modes).   For this reason we restrict our attention to \emph{squeezed} inputs, i.e., $M\neq 0$ in the input covariance.

\begin{thm}\label{sisogm}
Consider  a general SISO PQLS $\mathcal{G}=(C,\Omega)$  with pure input ${V}(N,M)$, such that $M\neq 0$. 

%

(1) The following are equivalent:
 \begin{itemize}
 \item[(i)] the system is globally minimal;
 \item[(ii)]  the stationary state of the system is fully mixed;
 \item[(iii)] $A$ and $A^\dagger$ have different spectra, i.e., $\sigma(A) \cap \sigma (A^\dagger) =\emptyset $;
 \item[(iv)] $A$ does not have real, or pairs of complex conjugate eigenvalues.
 \end{itemize}

(2) Let $\mathcal{P}$ be the set of all eigenvalues of $A$ that are either real or come in complex-conjugate pairs. A globally minimal realization of the system is given by the series product of one mode systems 
$\mathcal{G}_{m,i}= ( c_i=\sqrt{2|\mathrm{Re} \lambda_i}|, \Omega_i=-\mathrm{Im}\lambda_i )$ for indices $i$ such that 
$\lambda_i\notin \mathcal{P}$. 

%

\end{thm}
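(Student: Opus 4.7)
My plan is to break the theorem into its logical parts and exploit the scalar form of the passive SISO transfer function derived in Example \ref{sisoexample}, namely $\Xi(s)=\prod_{i=1}^n (s-\lambda_i^\#)/(s-\lambda_i)$ where $\lambda_i\in\sigma(A)$.

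First, the equivalence (i) $\Leftrightarrow$ (ii) is a direct application of Theorem \ref{equivalence}, so nothing new is needed there. Likewise (iii) $\Leftrightarrow$ (iv) is immediate once one notes that $\sigma(A^\dag)=\{\lambda_i^\#\}$, so the spectra meet iff some $\lambda_i$ is either real or has its conjugate among the $\lambda_j$'s. The real work is therefore to establish (i) $\Leftrightarrow$ (iv), which I would do by computing the power spectrum explicitly. Using the doubled-up form $\Xi(s)=\Delta(\xi(s),0)$ with $\xi$ scalar, and the covariance $V(N,M)$, a direct computation gives
\[
\Psi_V(s) = \begin{pmatrix} (N+1)\xi(s)\xi(-s^\#)^\# & M\,\xi(s)\xi(-s)\\ M^\#\xi(s^\#)^\#\xi(-s^\#)^\# & N\,\xi(s^\#)^\#\xi(-s)\end{pmatrix}.
\]
The diagonal entries collapse to $N+1$ and $N$ because $\xi$ is symplectic (unitary on $i\mathbb{R}$, extended analytically), so the only nontrivial information is carried by $f(s):=\xi(s)\xi(-s)$, which exists precisely because $M\neq 0$. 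Factoring, $f(s)=\prod_i (s^2-(\lambda_i^\#)^2)/(s^2-\lambda_i^2)$, a rational function of $s^2$.

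Next I would classify exactly which poles of $\xi$ cancel inside $f$. A factor $s^2-\lambda_i^2$ in the denominator cancels with a numerator factor $s^2-(\lambda_j^\#)^2$ iff $\lambda_i=\pm\lambda_j^\#$. Stability excludes the minus sign (the right-hand side would have nonnegative real part), so cancellation occurs iff $\lambda_i=\lambda_j^\#$, i.e.\ either $i=j$ and $\lambda_i$ is real, or $i\neq j$ and $\{\lambda_i,\lambda_j\}$ form a complex-conjugate pair. This is precisely the set $\mathcal{P}$. So $f$ retains exactly the poles $\lambda_i\notin\mathcal{P}$. From this (i) $\Rightarrow$ (iv) follows from a contrapositive: if $\mathcal{P}\neq\emptyset$, deleting those poles yields a strictly smaller passive transfer function with identical $f$, hence identical power spectrum, contradicting global minimality. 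The reverse (iv) $\Rightarrow$ (i) requires showing that when $\mathcal{P}=\emptyset$, $f$ determines $\xi$ uniquely among stable passive transfer functions: the roots of the denominator give $\{\lambda_i^2\}$ with multiplicities, and since $\mathrm{Re}\lambda_i<0$ rules out one of the two square roots, each $\lambda_i$ is recovered unambiguously, so no smaller passive system can produce the same $f$.

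For Part 2, I would simply verify that the cascade of the one-mode systems $\mathcal{G}_{m,i}=(c_i=\sqrt{2|\mathrm{Re}\lambda_i|},\Omega_i=-\mathrm{Im}\lambda_i)$ indexed by $\lambda_i\notin\mathcal{P}$ realises the required transfer function. Each factor has $\xi_i(s)=(s-\lambda_i^\#)/(s-\lambda_i)$ by Example \ref{sisoexample}, so the cascade has transfer function $\prod_{\lambda_i\notin\mathcal{P}}(s-\lambda_i^\#)/(s-\lambda_i)$; by the pole-cancellation analysis above its $f$ agrees with that of the original system, hence so does the full power spectrum. Minimality of the cascade, combined with the uniqueness argument of the previous paragraph, yields global minimality.

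The main obstacle I anticipate is the uniqueness step in (iv) $\Rightarrow$ (i): one must argue that no non-passive or lower-dimensional \emph{active} SISO system can reproduce the same power spectrum either. For this, I would either invoke Theorem \ref{mainresult} (noting that the present passive system, being globally minimal if no cancellations occur, falls in its scope and forces any equivalent system to share its transfer function up to symplectic equivalence) or give a self-contained dimension-count argument based on the degree of $f(s)$ as a rational function of $s^2$.
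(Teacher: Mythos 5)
Your proposal follows essentially the same route as the paper's proof: it reduces the power spectrum to the single informative quantity $\xi(s)\xi(-s)$ (nontrivial precisely because $M\neq 0$), performs the same pole--zero cancellation analysis tying cancellations to real eigenvalues and complex-conjugate pairs, invokes Theorem \ref{equivalence} for i) $\Leftrightarrow$ ii), and realises the globally minimal part as the same cascade of one-mode passive systems. The only differences are cosmetic: you are slightly more explicit about excluding the $\lambda_i=-\lambda_j^{\#}$ cancellation via stability and about the residual worry that a lower-dimensional \emph{active} system might match the power spectrum, a point the paper's own proof also leaves implicit.
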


\begin{proof}

(1) For passive SISO systems the only nontrivial contribution to the power spectrum is from off-diagonal element, 
\begin{eqnarray*}
\Xi(s)\Xi(-s)&=&\frac{\mathrm{det}\left(s\mathds{1}_n + A^\dagger\right)}{\mathrm{det}\left(s\mathds{1}_n-A\right)}
\frac{\mathrm{det}\left(s\mathds{1}_n- A^\dagger\right)}
{\mathrm{det}\left(s\mathds{1}_n+A\right)}\\
&=& \prod_{i=1}^n \frac{s+ \lambda^{\#}_i}{s-\lambda_i}  \frac{s- \lambda^{\#}_i}{s+\lambda_i}.
\end{eqnarray*}

In the above expression, zero-pole cancellations occur if and only if $\sigma(A) \cap \sigma (A^\dagger) \neq \emptyset $, or equivalently if $A$ has a real eigenvalue or a pair of complex conjugate eigenvalues.

If no zero-pole cancellations occur, then $\sigma(A)$ can be identified from $\Xi(s)\Xi(-s)$ and the transfer function can be reconstructed. In this case the system is globally minimal. 

If cancellations do occur then this happens in one of the two types of situations: 

(a) real eigenvalue: if $\lambda_i \in \mathbb{R}$ then the corresponding term in the above product cancels

(b) complex conjugate pairs: if $\lambda_i = {\lambda}^{\#}_j$ then the $i$ and $j$ terms in the product cancel against each other. 

In both cases, the remaining power spectrum has the same form, and can be seen as the power spectrum of a series product of one-dimensional passive systems, with dimension smaller than $n$, and therefore the system is not minimal.  

This shows the equivalence of (i), (iii) and (iv) while the equivalence of (i) and (ii) follows from Theorem \ref{equivalence}.


(2) The discussion so far shows that the transfer function factorizes as the product $\Xi(s)=\Xi_{\mathrm{m}}(s)\Xi_{\mathrm{p}}(s)$ of a part corresponding to eigenvalues $\lambda_i\in\mathcal{P}$, which has a trivial power spectrum due to zero-pole cancellations, and the part corresponding to the complement which does not exhibit any cancellations. A system with transfer function $\Xi(s)$ can be realized as series product $\mathcal{G}_{m}\triangleleft\mathcal{G}_{p}$ of two separate passive systems 
with  transfer functions $\Xi_{\mathrm{m}}(s)$ and $\Xi_{\mathrm{p}}(s)$. 
As argued before, $\mathcal{G}_{p}$ has a pure stationary state which is uncorrelated to $\mathcal{G}_{m}$ or the output, while 
$\mathcal{G}_{m}$ has a fully mixed state which is correlated to the output.

Since $\mathcal{G}_{p}$ does not contribute to the power spectrum, a globally minimal realization is provided by $\mathcal{G}_{m}$,
\begin{equation}\label{partm}
\Xi_{m}(s)= \prod_{i \notin \mathcal{P}}  \frac{s+ \lambda^{\#}_i}{s-\lambda_i}
\end{equation}
Each fraction in \eqref{partm} represents a bona fide PQLS $\mathcal{G}_{m,i}$ with Hamiltonian  and coupling parameters 
$\Omega_i=-\mathrm{Im}\lambda_i $ and $1/2 |c_i|^2=-\mathrm{Re} \lambda_i $.
\end{proof}


For PQLSs we now see that it is possible to construct a globally minimal realization of the PQLS \textit{directly} from the power spectrum. Moreover, global minimality of PQLSs  may be completely understood in terms of the spectrum of the system matrix $A$, just as was the case for minimality, stability, observability, and controllability \cite{Guta2, Indep}. 
An immediate corollary of this is the following.
\begin{Corollary}
A SISO  PQLS $\mathcal{G}=(C,\Omega)$, with pure input ${V}(N, M)$ has a pure stationary state if and only if either of the following holds:
\begin{enumerate}
\item[(1)] the input is vacuum;
\item[(2)] the eigenvalues of $A$ are real or come in complex-conjugate pairs. 
\end{enumerate}
\end{Corollary}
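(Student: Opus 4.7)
The plan is to deduce this corollary directly from Theorems~\ref{equivalence} and \ref{sisogm}, by splitting into cases according to the input state.

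First, I would distinguish the two possibilities for a pure SISO input state $V(N,M)$: either $M=0$ (which for pure inputs forces $N=0$, i.e.\ vacuum), or $M\neq 0$ (genuinely squeezed). The vacuum case is immediate: for a passive system driven by vacuum, the Lyapunov equation \eqref{eq.Lyapunov} has solution $P=\mathds{1}$ on the annihilation-creation block, so the stationary state is vacuum, hence pure. This handles condition~(1).

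For the squeezed case $M\neq 0$, I would invoke Theorem~\ref{sisogm}. Let $\mathcal{P}\subseteq\sigma(A)$ be the set of eigenvalues of $A$ that are real or occur in complex conjugate pairs, and let $\mathcal{P}^c$ be its complement. Theorem~\ref{sisogm}(2) provides a realisation as a series product $\mathcal{G}=\mathcal{G}_m\triangleleft \mathcal{G}_p$, where $\mathcal{G}_m$ carries the modes indexed by $\mathcal{P}^c$ and $\mathcal{G}_p$ carries the modes indexed by $\mathcal{P}$. By the proof of that theorem, the joint stationary state factorises as $\rho_m\otimes\rho_p$, with $\rho_p$ pure and uncorrelated from the remaining degrees of freedom, and $\rho_m$ fully mixed (by Theorem~\ref{equivalence}, since $\mathcal{G}_m$ is globally minimal). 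Consequently the full stationary state is pure if and only if $\mathcal{G}_m$ is the trivial zero-mode system, i.e.\ $\mathcal{P}^c=\emptyset$. This is exactly condition~(2).

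Combining both cases gives the biconditional. Conversely, to check that (1) or (2) are actually necessary, note that if the input is non-vacuum ($M\neq 0$) and some eigenvalue lies outside $\mathcal{P}$, then $\mathcal{G}_m$ is non-trivial and $\rho_m$ is a fully mixed state on a positive number of modes, so $\rho_m\otimes\rho_p$ cannot be pure. The main (small) obstacle is the backward direction in the squeezed case: one must confirm that when $\sigma(A)\subseteq\mathcal{P}$, the system really is unitarily (symplectically) equivalent to its $\mathcal{G}_p$ factor alone, so that the stationary state coincides with $\rho_p$. This follows because the decomposition $\Xi=\Xi_m\Xi_p$ reduces to $\Xi=\Xi_p$ when the index set for $\mathcal{G}_m$ is empty, and the realisations constructed in Theorem~\ref{sisogm} are minimal, so no spurious mixed modes can survive.
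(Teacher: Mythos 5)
Your proposal is correct and follows essentially the same route as the paper, which presents this as an immediate consequence of Theorem~\ref{sisogm}: the vacuum case is handled directly, and for $M\neq 0$ the purity of the stationary state is read off from the $\mathcal{G}_m\triangleleft\mathcal{G}_p$ factorisation, with the state pure exactly when the fully mixed factor $\mathcal{G}_m$ is trivial, i.e.\ when every eigenvalue of $A$ lies in $\mathcal{P}$. Your explicit treatment of the two cases and of the backward implication simply fills in details the paper leaves to the reader.
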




From Theorem \ref{sisogm} there are two types of ``elementary'' systems that are not identifiable from the power spectrum for 
arbitrary input ${V}(N,M)$. Written in the doubled-up notation, these are either:
 (i) one mode systems of the form $\mathcal{G}_1=\left(\Delta(c,0),0\right)$, or 
(ii) two mode systems of the form \\$\mathcal{G}_2=\left(\Delta(c,0),\Delta(\Omega_-,0)\right)\triangleleft\left(\Delta(c,0),\Delta(-\Omega_-,0)\right)$.  
Either way it is not immediately obvious whether these systems are consistent with the nonidentifiable systems in Theorem \ref{mainresult}.  As an example we will show that this is indeed the case in the case of 
$\mathcal{G}_1$ ($\mathcal{G}_2$ is similar). 
\begin{exmp}
Consider system $\mathcal{G}_1$ for input $V(N, M)$, which is known to have (trivial) power spectrum $V(N, M)$. Therefore, in the vacuum basis of the field the system will be 
\begin{equation}\label{none}
\tilde{\mathcal{G}}_1=\left(S^{\flat}_{\mathrm{in}}\Delta(c,0),0\right)
\end{equation}
 (see Sec. \ref{GHM}) and the power spectrum will be vacuum. 
Now, as $S_0\Delta\left(c_-,0\right)=\Delta\left(c_-,0\right)S_0$ it follows that $\tilde{\mathcal{G}}_1$ must be transfer function equivalent to the system $\left(\Delta(c,0),0\right)$ in the vacuum basis. Therefore, because this system is passive we have consistency with  Theorem \eqref{mainresult}.

In fact we can even see that \eqref{none} is passive by directly computing its transfer function. One can check that
\[\Xi_-(s)=\frac{s-|c|^2/2}{s+|c|^2/2} \,\,\,\mathrm{and}\,\,\,\Xi_+(s)=0.\]
\end{exmp}

Finally, it seems that the assumption of global minimality seems to be not very restrictive; we illustrate this in the form of an example.

\begin{exmp}
Consider the following SISO PQLS with two internal modes: \[\mathcal{G}=\left((0, 2\sqrt{2}), \frac{1}{2}\left(\begin{smallmatrix} 4+x &4-x\\4-x&4+x\end{smallmatrix}\right)\right),\]
where $x\in\mathbb{R}$. We examine for which values of $x$ the system is globally minimal for squeezed inputs. One can first check that the system is minimal if and only if $x\neq 4$. In Fig. \ref {gm2} we plot the imaginary parts of the eigenvalues of $A$ and $A^\dagger$.  
\begin{figure}
\centering
\includegraphics[scale=0.25]{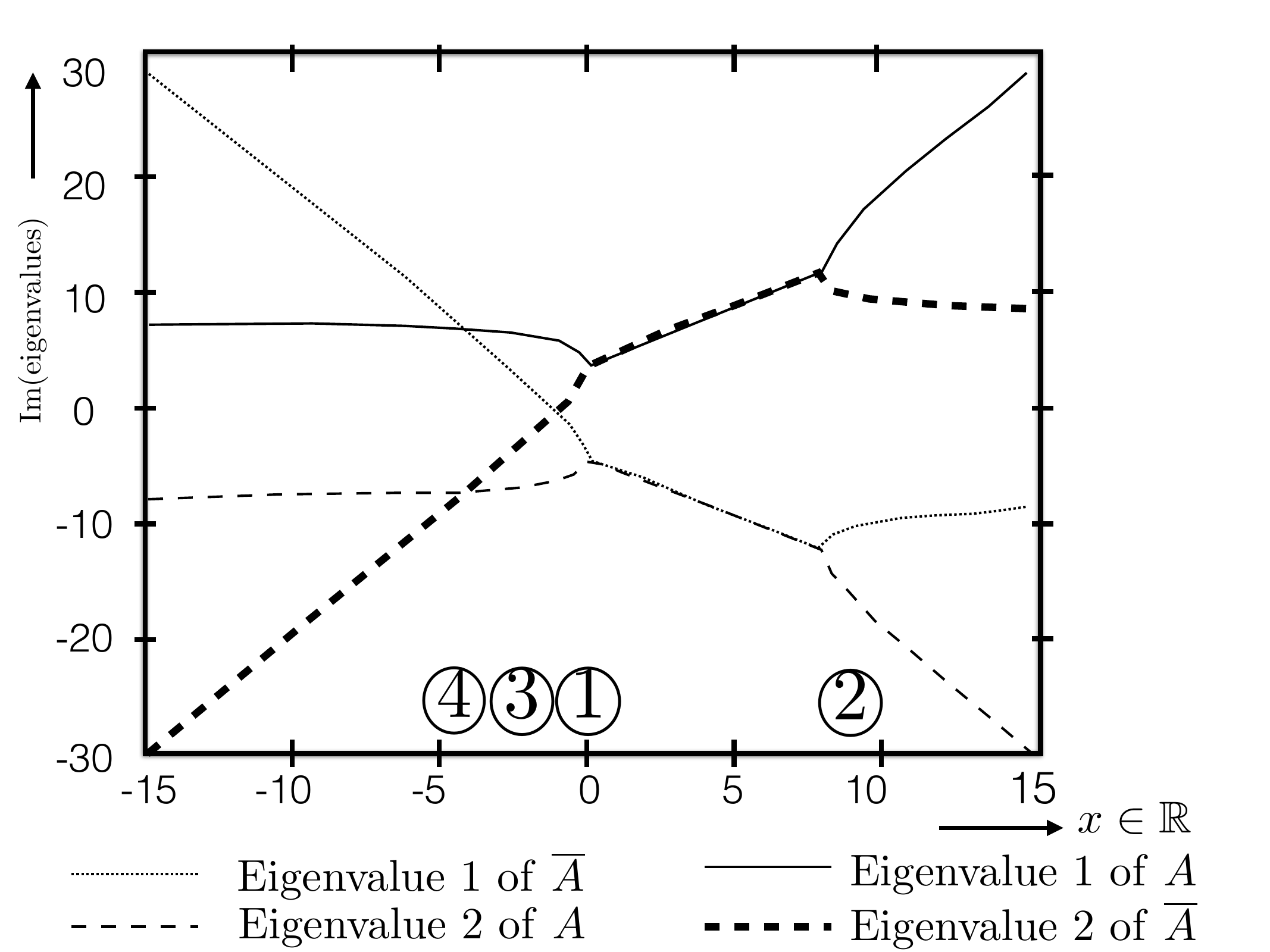}
\caption{Eigenvalues of $A$ and $A^\dagger$ as function of $x$. \label{gm2}}
\end{figure}
By Theorem \ref{sisogm}, the system is not globally minimal if any of the lines representing the eigenvalues of $A$ intersect those of 
$A^\dagger$. 
There are four points of interest that have been highlighted in the figure: 
\begin{itemize}
\item[\textcircled{1}] $x=0$: crossing of eigenvalues of $A$ but not with eigenvalues of $A^\dagger$; system is globally minimal.
\item[\textcircled{2}] $x=8$: crossing of eigenvalues $A$ but not with eigenvalues of $A^\dagger$; system is globally minimal.
\item[\textcircled{3}] $x=-1$: An eigenvalue of $A$ coincides with one of $A^\dagger$, therefore the dimension of the pure component is 1. This occurs when one eigenvalue is real.
\item[\textcircled{4}] $x=-4$: Both eigenvalues of  $A$ coincide with those of $A^\dagger$, and form a complex-conjugate pair,  
therefore the dimension of the pure space is 2.
\end{itemize}
In summary, there were only two values of $x$ for which the system is not globally minimal.
\end{exmp}
%

\subsection{Global minimality with entangled inputs}\label{newinput}
Here we show that  using an additional ancillary channel with an appropriate design of input makes it  possible to identify the transfer function from the power spectrum for \emph{all} minimal systems.

Consider the setup in Fig. \ref{entangledr}, where a pure entangled input state is fed into a SISO QLS and an additional ancillary channel. The $2\times 2$ blocks of the input ${V}(N,M)$ are
\[N=\left(\begin{smallmatrix} N_1&N_2\\{N_2}^{\#}&N_3\end{smallmatrix}\right)\,\,\, M=\left(\begin{smallmatrix} M_1&M_2\\M_2&M_3\end{smallmatrix}\right).\] The doubled-up transfer function is given by
\begin{equation}\label{fud}
\Xi(s)=\left(\begin{smallmatrix}\Xi_{-}(s)&0&\Xi_{+}(s)&0\\0&1&0&0\\{\Xi_{+}(s^{\#})}^{\#}&0&{\Xi_{-}(s^{\#})}^{\#}&0\\0&0&0&1\end{smallmatrix}\right).
\end{equation}
Now calculating the $(2,1)$ and $(1,4)$ entries of the power spectrum using (\ref{powers}), we obtain: 
$$
N_2{\Xi_{-}(s)}^{\#}+M_2{\Xi_{+}(s)}^{\#}
$$ 
and 
$$M_2\Xi_{-}(s)+N_2\Xi_{+}(s).$$
 Equivalently we may write these in matrix form as 
\[\left(\begin{smallmatrix}{N_2}^{\#}& {M_2}^{\#}\\ M_2& N_2\end{smallmatrix}\right)\left(\begin{smallmatrix}\Xi_{-}(s)\\\Xi_{+}(s)\end{smallmatrix}\right).\]
Hence if  we choose $|N_2|\neq|M_2|$ we may identify the transfer function of our SISO system uniquely. For example, such a choice of input would be $N=x\mathds{1}$ and $M=\left(\begin{smallmatrix}0&y\\y&0\end{smallmatrix}\right)$ with $x(x+1)=|y|^2$ (the purity assumption). As one can see there are no requirements on the actual QLS other than minimality. Note that in the case of passive systems we need only that $N_2$ or $M_2$ be different from zero.

\begin{figure}
\centering
\includegraphics[scale=0.35]{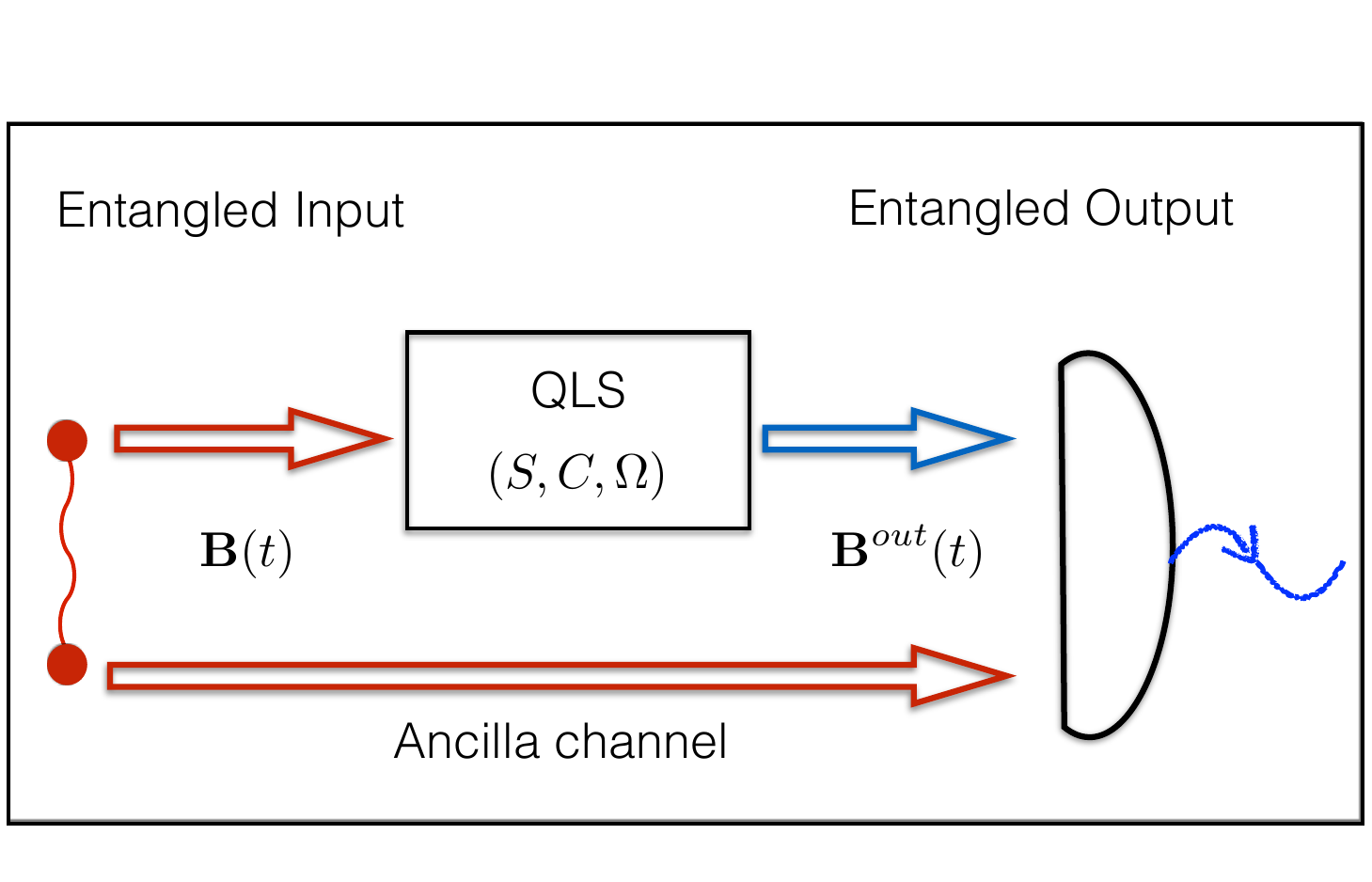}
\caption{Entangled setup discussed in Sec. \ref{newinput}. There are two channels, which are our PQLS and an additional ancilla channel. Inputs are entangled over the two channels.  \label{entangledr}}
\end{figure}

\begin{remark}
Recall from the previous subsections that the maximum amount of information we may
obtain about a PQLS from the power spectrum without the use of ancilla is that of the restriction to its globally
minimal subspace.
However, we have seen here that it is possible to construct a globally minimal pair, and hence obtain the whole transfer function simply by embedding the system in a larger space. To be clear here, there is no
contradiction because the transfer function we are attempting to identify is the one in Eq. (\ref{fud}) rather than the SISO system $\Xi(s)$. 
\end{remark}


\section{Conclusion}

We have considered the identifiability of linear system using two contrasting approaches: (1) Time-dependent input (or transfer function) identifiability and (2) stationary inputs (or power spectrum) identifiability. 
In the time-dependent approach we  characterized the equivalence class of systems with the same input-output data in Theorem \ref{symplecticequivalence}, thus generalizing the results of \cite{Guta2} to active systems. We then outlined a method to construct a (minimal and physical) realization of the system from the transfer function. In fact, all results here hold for MIMO systems.  
In the stationary input regime, Theorem \ref{equivalence} showed that global minimality is equivalent to the stationary state of the system being fully mixed.
Moreover, for a fixed pure input  generically the transfer function may be constructed uniquely from the power spectrum under global minimality. A method was also given for how to do this in Theorem \ref{mainresult}.  
Restricting to passive systems we saw that global minimality can be completely  understood simply by considering the  system matrix, $A$. In particular,  the transfer function can be constructed uniquely  from the power spectrum if and only if  none of the eigenvalues of $A$ are real nor come in complex-conjugate pairs (assuming that the input is squeezed). Finally, by using an ancillary channel it was shown that it is possible  to identify any QLS uniquely from the transfer function. 

There are several directions to extend this work. First, it is expected that all results found for the stationary input approach  can also be extended to  (i) MIMO systems and (ii) those systems beyond the generic ones considered within this paper. We intend to  address this in a future publication.    
 Given that we  now understand what is identifiable, the next step is to understand how well parameters can be estimated.   In the time-dependent approach this has been done for passive systems in \cite{Guta2, Levitt} but no such work exists for active systems or in the stationary approach at all. 
Last, it would be interesting to consider these identifiability problems in the more realistic scenario of noisy QLSs. In a QLS noise may be modelled by the inclusion of additional channels that  cannot be monitored. Understanding what can be identified here will likely be far more challenging.

\appendix

\section{Finding a minimal classical realization}\label{APP1}

In this appendix a set of (nonphysical) minimal and doubled-up matrices $(A_0, B_0, C_0)$ are found that realizes the transfer function \eqref{TF1}, which describes a (minimal) physical system $(A, C)$. 

We assume that the matrix  $A$ for   the $n$-mode minimal system, $(A, C)$, possesses  $2n$ distinct eigenvalues each with a nonzero imaginary part. This requirement  can be seen to be generic in the space of all quantum systems \cite{cascade}. Moreover, it can also be shown that if  $\lambda_i$ is a complex eigenvalue of $A$ with right eigenvector $\left(\begin{smallmatrix}R_i\\S_i\end{smallmatrix}\right)$ and left eigenvector  $\left(U_i, V_i\right)$, then ${\lambda}^{\#}_i$ is also an eigenvalue with right eigenvector $\left(\begin{smallmatrix}{S}^{\#}_i\\{R}^{\#}_i\end{smallmatrix}\right)=\Sigma{\left(\begin{smallmatrix}R^{\#}_i\\S^{\#}_i\end{smallmatrix}\right)}$ and left eigenvector $   \left({V}^{\#}_i, {U}^{\#}_i\right)= {\left(U^{\#}_i, V^{\#}_i\right)}\Sigma_n$, where     $R_i, S_i\in\mathbb{C}^{1\times n}$, $U_i, V_i\in\mathbb{C}^{n\times1}$ and $\Sigma_n:=\left(\begin{smallmatrix} 0_n&1_n\\1_n&0_n\end{smallmatrix}\right)$. That is, for each eigenvalue and eigenvector, there exists a corresponding mirror pair. 
This property follows from the fact that $A$ has the doubled-up form  $A:=\Delta\left(A_{-}, A_{+}\right)$.

We now construct a minimal realization called Gilbert's realization \cite{zhou}. 
The only thing that we need to take care of is that the realization we obtain is of the doubled-up form. 

As the transfer function may be written as 
\[\Xi(s)=\frac{N(s)}{\prod_{i=1}^n (s-\lambda_i)(s+\lambda_i)}.\]
we can perform a partial fraction expansion, so that 
\[\Xi(s)=\mathds{1}+\sum_{i=1}^n \frac{P_i}{(s-\lambda_i)} +\frac{Q_i}{\left(s-{\lambda}^{\#}_i\right)}.\]
As we show below, the matrices $P_i, Q_i$ are rank 1. Therefore there exist matrices $B_i\in\mathbb{C}^{1\times 2}$, $B'_i\in\mathbb{C}^{1\times 2}$, $C_i\in\mathbb{C}^{2\times 1}$, and $C'_i\in\mathbb{C}^{2\times 1}$ such that 
\begin{equation*}
C_iB_i=P_i \,\,\,\mathrm{and}\,\,\, C'_iB'_i=Q_i.
\end{equation*}
The Gilbert realization $A_0, B_0, C_0$ is 
\begin{equation*}
A_0:=\mathrm{diag}\left(\lambda_1, \hdots, \lambda_n,{\lambda}^{\#}_1,\hdots, {\lambda}^{\#}_n\right),
\end{equation*}
\begin{equation*}
B_0:= \left[
  \begin{array}{c}
    B_1 
    \\
     \vdots              \\
    B_n 
    \\
     B'_1 
     \\
              \vdots   
               \\
    B'_n 
  \end{array}
\right]
\end{equation*}
and
\begin{equation*}
C_0:= 
\left[
  \begin{array}{cccccc}
    C_{1}    &      \ldots & C_{n} &C'_1&\ldots&C'_n   \\
  \end{array}
\right].
\end{equation*}
From the expression of the physical transfer function we have
\begin{equation*}
C\left(s-A\right)^{-1}C^{\flat}=
\sum^n_{i=1} \frac{W_i}{s-\lambda_i}+
\frac{\Sigma{W}^{\#}_i \Sigma}{s-{\lambda}^{\#}_i }
\end{equation*}
where $W_i$ are the rank-one matrices
\begin{equation*}
W_i=\left(\begin{smallmatrix}C_-R_i+C_+S_i\\{C}^{\#}_+R_i+{C}^{\#}_iS_i\end{smallmatrix}\right) \left(\begin{smallmatrix}U_iC_-^{\dag}-V_iC_+^{\dag}&U_iC_+^T+V_iC^T_-\end{smallmatrix}\right).
\end{equation*}
Having fixed $B_i$ and $C_i$ the matrices $B'_i$ and $C'_i$ can then be chosen as 
\begin{equation}
B'_i={B}^{\#}_i\Sigma_2 \,\,\, \mathrm{and}\,\,\,
C'_i=\Sigma_2{C}^{\#}_i
\end{equation}
and so the matrices 
$(A_0, B_0, C_0)$
are of the doubled-up type.

Note that using Gilbert's realization on MIMO systems can also be seen to give a minimal doubled-up realization, but we do not discuss this any further here.

\section{Proving that there exists a minimal physical system with transfer function \eqref{huck2}}\label{APP2}
First, since we know that the system described by $\Xi(s)$ is physical, then the result of connecting it in series to another physical quantum system will be physical. To this end, consider the system 
$$\tilde{\mathcal{G}}=\mathcal{G}\triangleleft \mathcal{G}_n\triangleleft\hdots\triangleleft\mathcal{G}_1,$$
where $G$ was our original system and $G_i$ is a single mode active system with coupling $c_-=0$, $c_+=\sqrt{2\mathrm{Re}\mu_i}$, and Hamiltonian $\Omega_-=\mathrm{Im}\mu_i$, $\Omega_+=0$, where $\mu_i$ are given in the form of $\Xi^{(1)}(s)$.   Then $\tilde{\mathcal{G}}$ is physical and is described by the transfer function $\Xi^{(2)}(s)$. Also it must be stable because the transfer functions  $\Xi^{}(s)$ and $\Xi^{(1)}(s)$ have poles in the left half of the complex plane only. 
However, it is not minimal. 

To find a minimal system employ the quantum Kalman decomposition from \cite{new}. The result is that this system may be written in the form of Eqs. (103) and (104) in  \cite{new}. Hence the system is transfer function equivalent to the \textit{minimal} system  with matrices  (in quadrature form) $\left(\tilde{A}_{co}, B_{co}, C_{co}\right)$ from \cite{new}. This system gives a minimal realization of the transfer function $\Xi^{(2)}(s)$.
 It  can also can be verified that it is physical (this either  follows  because its transfer function is doubled-up and symplectic  \cite{peter} or alternatively from the results in \cite{new}) and that the matrices $\left(\tilde{A}_{co}, B_{co}, C_{co}\right)$ are of the doubled-up type, as required.
 
 Finally, since  two stable and minimal quantum systems connected in series is always minimal (see a proof of this below), then it is clear that $\Xi^{(2)}(s)$ must necessarily be of size  $n-k$. To see the previous claim,  suppose that we have two minimal systems $(C_1, A_1)$ and $(C_2, A_2)$, where $C_i$ is the coupling matrix of the system and $A_1$ is the usual system matrix. Connecting these systems in series [$(C_1, A_1)$ into $(C_2, A_2)$] we get the resultant coupling and system matrices  \cite{zhou}
$$
(C, A):=\mathbf{\Big(}\left(\begin{smallmatrix} C_1&C_2\end{smallmatrix}\right) ,\left(\begin{smallmatrix} A_1&0\\-C_2^{\flat} C_1&A_2\end{smallmatrix}\right)\mathbf{\Big)}.
$$

Recall that in order to show that the QLS (C,A) is minimal it is enough to show that the pair $(A, -C^\flat)$ is controllable \cite{Indep}.  This is equivalent to the condition that for all eigenvalues and left eigenvectors of 
$A$, i.e. $vA=v\lambda $ then $vC^\flat \neq0$ \cite{zhou}. 

 First, 
$\left(y_1, y_2 \right)A=\left(y_1, y_2 \right)\lambda $ implies $y_2A_2=y_2\lambda$. Note that by stability $\mathrm{Re}(\lambda)<0$. Hence by controllability of the second system $y_2C_2^\flat\neq0$. Suppose to the contrary that $(A, -C^\flat)$ is not controllable. Then $y_1C^\flat_1+y_2C^\flat_2=0$, which together with $\left(y_1, y_2 \right) A=\left(y_1, y_2 \right)\lambda $ would imply that 
\begin{equation}\label{apeq}
y_1\left(A_1+C^{\flat}_1C_1\right)=y_1\lambda.
\end{equation}
Since $A_1=-iJ\Omega_1-\frac{1}{2}C_1^{\flat}C_1$ then for \eqref{apeq} it is required that  $\mathrm{Re}(\lambda)>0$, which is a contradiction.

\bibliographystyle{apsrev4-1}
\bibliography{references.bib} 

\end{document}